\newlength\myindent
\begin{document}
\sloppy
\title{A Personalized Preference Learning Framework for Caching in Mobile Networks}
\author{Adeel~Malik, Joongheon Kim,~\IEEEmembership{Senior Member,~IEEE}, Kwang~Soon~Kim,~\IEEEmembership{Senior Member,~IEEE}, \\ and
        Won-Yong Shin,~\IEEEmembership{Senior Member,~IEEE}
\thanks{A. Malik is with the Department of Communication System, EURECOM, Sophia-Antipolis 06904, France (e-mail: adeel\_malik91@yahoo.com).}
\thanks{J. Kim is with the School of Electrical Engineering, Korea University, Seoul 02841, Republic of Korea (e-mail: joongheon@korea.ac.kr).}
\thanks{K. S. Kim is with the Department of Electrical and Electronic Engineering, Yonsei University, Seoul 03722, Republic of Korea (e-mail: ks.kim@yonsei.ac.kr.).}
\thanks{W.-Y. Shin (corresponding author) is with the Department of Computational Science and Engineering, Yonsei University, Seoul 03722, Republic of Korea (e-mail: wy.shin@yonsei.ac.kr).}}

\newtheorem{axiom}{Axiom}
\newtheorem{lemma}{Lemma}
\newtheorem{theorem}{Theorem}
\newtheorem{prop}{Proposition}
\newtheorem{observation}{Observation}
\newtheorem{definition}{Definition}
\newtheorem{remark}{Remark}
\newtheoremstyle{case}{}{}{}{}{}{:}{ }{}
\theoremstyle{case}
\newtheorem{case}{Case}

\def \diag{\operatornamewithlimits{diag}}
\def \min{\operatornamewithlimits{min}}
\def \max{\operatornamewithlimits{max}}
\def \log{\operatorname{log}}
\def \max{\operatorname{max}}
\def \rank{\operatorname{rank}}
\def \out{\operatorname{out}}
\def \exp{\operatorname{exp}}
\def \arg{\operatorname{arg}}
\def \E{\operatorname{E}}
\def \tr{\operatorname{tr}}
\def \SNR{\operatorname{SNR}}
\def \dB{\operatorname{dB}}
\def \ln{\operatorname{ln}}

\def \bmat{ \begin{bmatrix} }
\def \emat{ \end{bmatrix} }

\def \be {\begin{eqnarray}}
\def \ee {\end{eqnarray}}
\def \ben {\begin{eqnarray*}}
\def \een {\end{eqnarray*}}

\IEEEtitleabstractindextext{
\begin{abstract}
This paper comprehensively studies a content-centric {\em mobile} network based on a {\em preference learning} framework, where each mobile user is equipped with a finite-size cache. We consider a practical scenario where each user requests a content file according to its own preferences, which is motivated by the existence of heterogeneity in file preferences among different users. Under our model, we consider a single-hop-based device-to-device (D2D) content delivery protocol and characterize the average hit ratio for the following two file preference cases: the {\em personalized} file preferences and the {\em common} file preferences. By assuming that the model parameters such as user activity levels, user file preferences, and file popularity are unknown and thus need to be inferred, we present a {\em collaborative filtering (CF)}-based approach to learn these parameters. Then, we reformulate the hit ratio maximization problems into a submodular function maximization and propose two computationally efficient algorithms including a greedy approach to efficiently solve the cache allocation problems. We analyze the computational complexity of each algorithm. Moreover, we analyze the corresponding level of the approximation that our greedy algorithm can achieve compared to the optimal solution. Using a real-world dataset, we demonstrate that the proposed framework employing the personalized file preferences brings substantial gains over its counterpart for various system parameters.
\end{abstract}

\begin{IEEEkeywords}
Caching, collaborative filtering, learning, mobile network, personalized file preferences.
\end{IEEEkeywords}}
\maketitle
\IEEEdisplaynotcompsoctitleabstractindextext

%
\IEEEpeerreviewmaketitle
\section{Introduction}~\label{section:1}
\IEEEPARstart{T}{h}e growing trend in mobile data traffic drives a need of a new wireless communication technology paradigm in which radios are capable of learning and decision making in order to autonomously determine the optimal system configurations. In this context, equipping the communication functionality with machine learning-based or data-driven algorithms has received a considerable attention both in academia as well as in industrial communities.

\subsection{Prior Work}~\label{section:11}

Cache-enabled (or content-centric) wireless systems are equipped with finite storage capacity, which restricts us from storing the entire content library at the local cache. In order to bring content objects closer to requesting users, deciding which content needs to be cached at which user's or helper's cache plays a crucial role on the overall performance of content-centric wirelss networks. This cache placement problem has recently attracted a wide attention. In general, the prior work on caching can be divided into two categories: caching at small-cell base stations (or helper nodes) \cite{femto, G1, G2, G3} and caching at users (or devices) \cite{d2d1,hit,d2d, malik}. Since the optimal caching problems in \cite{femto, G1, G2, G3, d2d1, hit} cannot be solvable in polynomial time, approximate solutions with performance guarantees were presented. It was shown in \cite{femto, d2d1} that the optimal cache placement problem fell in the category of monotone submodular maximization over a matroid constraint~\cite{greedy1}, and a proposed time-efficient greedy algorithm achieved an approximation within the factor of $\frac{1}{2}$ of the optimum. In \cite{d2d1}, a dynamic programming algorithm whose complexity increases exponentially with the number of users was also proposed to obtain the optimal solution. Similarly, a joint routing and caching problem~\cite{G1}, a multicast-aware caching problem~\cite{G2}, and a mobility-aware caching problem~\cite{G3} were studied, while proving the NP-hardness of their caching problems and presenting polynomial-time greedy algorithms to obtain their approximate solutions.

On the one hand, studies on integrating {\em machine learning-based or data-driven} algorithms into caching frameworks in wireless networks have recently received attention \cite{ learn3, learn4, learn6, learn7 }.  In \cite{learn3, learn4}, the optimal caching problems were studied when the file popularity profile is unknown---each problem was modeled as a combinatorial multi-armed bandit problem to learn the content popularity distribution for caching at one base station. Moreover, a transfer learning-based technique exploiting the rich contextual information (e.g., users' content viewing history, social ties, etc.) was proposed in \cite{learn6} to estimate the content popularity distribution. The training time in a transfer learning-based approach to obtaining a good estimate of the content popularity distribution for caching was also investigated in \cite{learn7}.

On the other hand, the importance of personalized file preferences in content-centric networks  was studied in~\cite{hit, chen, learn1, learn2}. In \cite{hit}, a low-complexity semigradient-based cooperative caching scheme was designed in mobile social networks by incorporating probabilistic modeling of user mobility and heterogeneous interest patterns.  It was assumed in \cite{hit} that all users have the same activity level (i.e., the same number of requests generated by each user) and the user file preferences are known, which may be hardly realistic. Later, the optimal caching policy via a greedy algorithm for cache-enabled device-to-device (D2D) communications was presented in \cite{chen} by modeling the behavior of user requests resorting to probabilistic latent semantic analysis. In~\cite{learn1}, when there is some coordination between recommender systems and caching decisions in a wireless network with small cells, possible benefits that can arise for the end users were investigated. In~\cite{learn2}, a recommender system-aided caching placement scheme was also developed in wireless social networks. Recently, the benefits of user preference learning over the common file preferences were studied in wireless {\em edge-caching} scenarios, where the network edge such as fog access points~\cite{pf1}, small-cell base stations~\cite{pf2}, and macro base stations~\cite{pf3} is taken into account so that users fetch the content of interest from edge caches. However, different from existing literature in~\cite{pf1, pf2, pf3}, our focus is on {\em D2D caching at mobile users} by inferring different user activity levels and user file preferences for each user.  Moreover, in~\cite{Inc1,Inc2,Inc3,Inc4}, there have been various attempts to tackle caching problems by taking into account the concern that several issues such as limited memory~\cite{malik}, security~\cite{Inc5}, limited energy \cite{Inc6}, and social ties~\cite{Inc5} may discourage mobile users to share their content files with other users over the D2D transmission. In this context, proper incentive mechanisms (e.g., trust based on friendships~\cite{Inc2} and monetary incentives~\cite{Inc3}) for those users who are willing to share their files via D2D communications were introduced offered by operators.

As another caching framework, coded caching \cite{Maddah1, Maddah2, d2d} and maximum distance separable (MDS)-coded caching \cite{ femto, malik} have received a lot of attention in content-centric wireless networks. The content placement is optimized so that several different demands can be supported simultaneously with a single coded multicast transmission, while MDS-coded subpackets of content objects are stored in local caches and the requested content objects are retrieved using unicast transmission.

\subsection{Main Contribution}~\label{section:12}

In this paper, we present a {\em preference learning}
framework for a content-centric {\em mobile} caching network in which mobile users equipped with a finite-size cache are served by one central processor (CP) having access to the whole file library. Each user requests a content file from the library independently at random according to its own {\em predicted personalized file preferences}. Unlike most of the prior work (e.g.,~\cite{alfano, anh, milad, malik} and references therein) in which a user requests according to the common file popularity distribution, our model is motivated by the existence of heterogeneity in file preferences among different mobile users, e.g., applications involving on-demand videos requested by mobile users. In our mobile network model, we consider single-hop D2D content delivery and characterize the average hit ratio. Then, we aim to develop caching strategies in the sense of maximizing the hit ratio (also known as the offloading ratio) for two file preference cases: the {\em personalized} file preferences and the {\em common} file preferences.  The most distinct feature in our framework is to incorporate the notion of {\em collaborative filtering (CF)}, which was originally introduced in the design of recommender systems~\cite{ ubcf,ibcf,svd,auto}, into caching problems in order to infer/predict the unknown model parameters such as user activity levels, user file preferences, and file popularity. Despite the existence of several techniques to predict the file popularity, CF is the most widely used method to predict users' interests in a particular content object (e.g., the content ratings in recommender systems), since it can be implemented easily without requiring in-depth domain knowledge, producing satisfactory prediction accuracy.  The main idea of CF is to exploit users' past behaviors and infer the hidden user preferences. We aim at showing a successful application of such a well-known technique in one domain (e.g., recommender systems) to another domain (e.g., content-centric wireless systems) by predicting our model parameters via CF. We note that the accessing frequency of a file in \cite{chen} may not be a suitable metric for predicting unknown model parameters. This is because content files such as season episodes or news talk shows do not need to be accessed more than once as users tend to watch them only once. In our study, we thus exploit the user rating history as a metric for learning the above model parameters instead of using the number of times that each user accesses a file. The main technical contributions of this paper are summarized as follows:
\begin{itemize}
\item We first introduce a file preference learning framework in cache-enabled mobile network, which intelligently combines caching and CF techniques.
\item We characterize the average hit ratio under our content-centric mobile network according to both personalized file preferences and common file preferences.
\item We present a CF-based approach to infer the model parameters such as user activity levels, user file preferences, and file popularity by utilizing the user rating history.
\item We formulate hit ratio maximization problems for our two caching strategies and analyze the computational complexity of the optimal solutions.
\item We reformulate the hit ratio maximization problems into a submodular function maximization subject to matroid constraints in order to obtain an approximate solution within provable gaps.
\item We present two computational efficient algorithms, including a {\em greedy}  approach, and analyze corresponding computational complexities. Additionally, we show the corresponding level of the approximation that our greedy algorithm can achieve compared to the optimal solution.
\item Using a real-world dataset adopted for evaluating recommender systems, we perform intensive numerical evaluation and demonstrate that 1) the proposed framework employing the personalized file preferences leads to substantial gains over the baseline employing the common file preferences with respect to various system parameters and 2) the performance of our computationally efficient approaches is quite comparable to the optimal one while they manifest significant complexity reduction.
\end{itemize}
 
 To the extent of our knowledge, this paper is the first attempt to present {\em personalized parameter learning} in cache-enabled {\em mobile} networks.
 
\begin{table}[t]
\centering
\caption{ Summary of Notations}
\label{tab:Not}
\begin{tabular}{ll}
\hline
Parameters & Description \\ \hline
$N$         &      Number of users      \\
  $S$    &      Cache size of each user   \\
$F$      &      Library size       \\
 $\mathcal{D}$     &    Collaboration distance         \\
$\mathcal{T}$      &    Time period for cache update         \\
 $P(u_i)$     &          User activity level   \\
$P(f_j|u_i)$      &     User file preference        \\
$P(f_j)$      &       File popularity      \\
 $\mathcal{C}^{N\times F}$         &   Cache allocation strategy         \\
   $H_{avg}$               &   Average hit ratio         \\
     $p_{u_i,u_j} $                   &    Contact probability between users $u_i$ and $u_j$       \\
        $O_{avg}$                        &          Average outage probability    \\
       $\Lambda$                    &        Outage capacity     \\
     ${\bf R}_c^{N\times F}$                 &    Rating  matrix          \\
       $r_{k,a}$                       &      Rating given by user $u_k$ to file $f_a$        \\
       $n_k$                   &      Number  of  ratings  given  by  user $u_k$        \\ \hline
 \end{tabular}
\end{table}

\subsection{Organization}~\label{section:13}
The rest of this paper is organized as follows. In Section 2, the network and caching models are described. In Section 3, problem definitions for both proposed and baseline caching strategies are presented. In Section 4, a CF-based parameter learning method is introduced. The optimal and greedy caching algorithms are presented with complexity analysis in Section 5. In Section 6, numerical results are shown to analyze the impact of personalized file preferences in our network. Finally, Section 7 summarizes the paper with some concluding remarks.

\subsection{Notations}~\label{section:14}
Throughout this paper, $P(\cdot)$ is the probability, $\emptyset$ is the empty set, and $\left|\mathcal{X}\right|$ denotes the cardinality of a set $\mathcal{X}$. We use the asymptotic notation $f(x)= \mathcal{O}(g(x))$ means that there exist constants $a$ and $c$ such that $f(x) \leq ag(x)$ for all $x > c$. Moreover, Table~\ref{tab:Not} summarizes the notations used throughout this paper. Some notations will be more precisely defined in the following sections, as we introduce our system model and proposed approaches.

\section{System Overview}~\label{section:2}
\setlength{\abovedisplayskip}{3pt}   
\setlength{\belowdisplayskip}{3pt}    

\subsection{Network Model}\label{section:21}
Let us consider a wireless network illustrated in Fig.~\ref{fig:SM}, having a CP serving $N$ mobile users denoted by the set $\mathcal{N}= \left\{u_1, u_2, \cdots, u_N\right\}$. In our content-centric mobile network, we assume a practical scenario where each mobile user $u_i \in \mathcal{N}$ is equipped with a local cache of the same finite storage capacity $S$ (i.e., the cache can store up to $S$ content files) and acts as a helper to share the cached files via D2D {\em single-hop} communications. We assume that all the users are willing to share the content files with other users altruistically over the D2D transmission all the time\footnote{We leave how to incentivize the users in another setting with user selfishness as future work since the design of incentive mechanisms for file sharing in content-centric networks is not straightforward and needs to be judiciously incorporated into the overall caching framework including D2D content delivery.}. The CP is able to have access to the content library server (e.g., YouTube) via infinite-speed backhaul and initiates the caching process. We consider a practical scenario that the number of content files in the library is dynamic and keeps updated due to the continuous addition of new user-generated content (UGC). It has been observed that the traffic toward the content library server is dominated by the small portion of popular content files~\cite{traffic}.
To adopt this behavior, we assume that users are interested in requesting the content file from a set of $F$ popular content files $\mathcal{F} =\left\{f_1, f_2, \cdots, f_F\right\}$ that dominates the request traffic, where the size of each file is the same.
\begin{figure}[t]
\centering
 \includegraphics[width=0.5\linewidth]{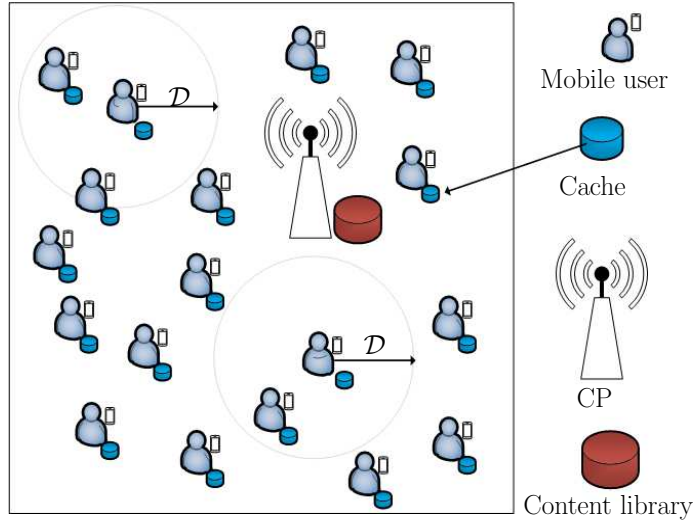}
\caption{Our content-centric mobile network model.}
\label{fig:SM}
\end{figure}

In our network model, the time is divided into independent slots $t_1, t_2,\cdots$ and each user generates requests for content files during its allocated time slot. Note that each user can generate the request of content files that are not in the set $\mathcal{F}$, but for our content-centric network, we are interested in caching only the popular content files that dominate the request traffic.

In addition, we would like to illuminate two important real-world observations that were largely overlooked in most studies on  cache-enabled wireless networks~\cite{alfano,anh,milad,malik}: the file popularity is not the same as the preference of each individual user; and only a small portion of users are active in creating the data traffic. On the other hand, the main focus of this study is on comprehensively studying the impact and benefits of {\em personalized} file preferences in our {\em mobile} network by assuming that each user $u_i\in \mathcal{N}$ has a different activity level $P(u_i)$ and a different personalized file preference $P(f_j|u_i)$. To this end, we formally define three model parameters as follows:

\begin{itemize}
    \item \textbf{Activity level $P(u_i)$} of user $u_i$ is defined as the probability that user $u_i\in \mathcal{N}$ is active by requesting a content file, where $P(u_i) \in [0,1]$. More specifically, $P(u_i)$ indicates the ratio of the number of active time slots in which user $u_i$ sends a content request to the total number of time slots,  $\mathcal{T}$, in a specific time period.
    \item \textbf{Personalized file preference $P(f_j|u_i)$}  of user $u_i$ is defined as the conditional probability that user $u_i \in \mathcal{N}$ requests file $f_j \in \mathcal{F}$ given that the user sends a request, where $\sum_{f_j \in \mathcal{F}}P(f_j|u_i) = 1$ and $P(f_j|u_i) \in [0,1]$. More specifically, $P(f_j|u_i)$ indicates the ratio of the preference that user $u_i$ has given to file $f_j$ to the sum of all file preferences of the same user.
    \item \textbf{File popularity $P(f_j)$} is defined as the probability that a file $f_j$ is requested, where $\sum_{f_j \in \mathcal{F}} P(f_j) = 1$ and $P(f_j) \in [0, 1]$. This probability represents how popular a content file is in a network. Note that this is the case where all users have the \textit{common} file preferences (i.e., $P(f_j|u_a)=P(f_j|u_b)= P(f_j)$) as in~\cite{alfano,anh,milad,malik}.
\end{itemize}
It will be described in Section~\ref{section:4} how to learn these three parameters based on the CF-based inference approach. Moreover, due to the continuous addition of new UGC to the content library, the popularity of content files changes over time. In our study, we assume that the model parameters such as user activity levels and user file preferences are fixed for a certain period of $\mathcal{T}$ time slots, and after every period $\mathcal{T}$, the CP iteratively learns the model parameters based on the updated popularity so that it initiates the content placement phase to fill the cache of each user.

\subsection{Caching Model}\label{section:22}
In content-centric wireless networks, a caching mechanism can be divided into the following two stages: the content placement phase and the content delivery phase. We first describe the content placement phase which determines the strategy for caching the content objects in the cache of $N$ mobile users.
During the content placement process, each user $u_i \in \mathcal{N}$ stores the content files according to a caching vector $\textbf{C}_{u_i}= [c_{u_i,f_1}, c_{u_i,f_2}, \cdots, c_{u_i,f_F}]$ (will be optimized later on the basis of the personalized file preferences), where $c_{u_i,f_j}$ is given by
	\begin{equation} \label{eq:c_u}
c_{u_i,f_j} = \begin{cases}
    1 & $if the user $u_i$ caches file $f_j \\
    0 & $if the user $u_i$ does not cache cache $f_j.
\end{cases}
\end{equation}
In order to have a feasible cache allocation strategy $\mathcal{C}^{N\times F}=$ $ [\textbf{C}_{u_1}; \textbf{C}_{u_2}; \cdots; \textbf{C}_{u_N}]$, $\mathcal{C}^{N\times F}$ should satisfy the following constraints:
\begin{align}
	\sum_{f_j \in \mathcal{F}}&c_{u_i,f_j} \leq S, \hspace{1cm} \forall u_i \in \mathcal{N}, \label{eq:Cons1}\\
c_{u_i,f_j} &\in \left\{0,1\right\}, \hspace{1cm} \forall u_i \in \mathcal{N}, f_j \in \mathcal{F}. \label{eq:Cons2}
\end{align}

\noindent Note that the constraint in \eqref{eq:Cons1} is the individual user's storage capacity constraint. Due to the dynamics of the file popularity, the caching strategy $\mathcal{C}^{N\times F}$ should be iteratively determined by the CP based on the predicted model parameters after every time period $\mathcal{T}$ during the {\em off-peak time}, representing the time period when network resources are in abundance (i.e., the data traffic is very low). One example includes such a midnight time that the content request activity from the network is very low as most of mobile users are not active during this time. 

We now move on to the delivery phase which allows the requested content objects to be delivered from the source to the requesting user over wireless channels. In our network model, the users are assumed to prefetch a part of (popular) content objects in their local caches from the content library server or the CP when they are indoors. For example, during off-peak times, the CP can initiate the content placement phase and fill the cache of each user. On the other hand, for the case when the actual requests take place, we confine our attention to an outdoor environment where users are moving. This comes from the fact that in the fast mobility scenario, reception of large-scale files (e.g., a high-resolution video) through cellular networks, operating on licensed spectrum bands, may not be cost-effective. Thus, in the outdoor setting, a reasonable assumption is that only D2D communications are taken into account for content delivery, i.e., the server and base stations do not participate in the delivery phase, which essentially lies in the same line as in \cite{malik, alfano, d2d}. More specifically, in our content-centric mobile network, we focus on the {\em peak time} in which each mobile user retrieves the requested content file either from his/her own cache or via single-hop D2D communications from one of the users storing the requested content file in their caches within the collaboration distance $\mathcal{D}$ of the requested user (refer to Fig.~\ref{fig:SM}). In Section~\ref{section:hb}, we shall relax this assumption and numerically show the effects of deploying cache-enabled femto-cell base stations (FBSs) on the performance by considering the content retrieval on both D2D and cellular modes. In our study, the protocol model in~\cite{gupta} is adopted for successful D2D content transmission. According to the protocol model, the content delivery from source user $u_s$ to requesting user $u_d$ will be successful if the following conditions hold: 1) $d_{u_su_d}(t_i)\!\leq \mathcal{D}$ and 2) $d_{u_bu_d}(t_i)\geq(1+\Delta)\mathcal{D}$, where $d_{u_su_d}(t_i)$ and $d_{u_b,u_d}(t_i)$ represent the Euclidean distances between user pairs $(u_s,u_d)$ and $(u_b,u_d)$, respectively, at given time slot $t_i$, for every user $u_b$ that is simultaneously transmitting at the same time slot, and $\Delta>0$ is a guard factor. We assume that the number of users, $N$, in the network is sufficiently large, so that each square cell of area $\mathcal{D}^2$ contains at least one requesting user with high probability (refer to~\cite{gupta} for more details). When successful transmission occurs (i.e., the two conditions in the protocol model  hold), we assume that the total amount of data transferred during one time slot is large enough to transfer a complete content file from a sender to a receiver (requester). Nevertheless, in a given time slot, a requesting user can receive no more than one content file.

\section{Problem Definition}~\label{section:3}
In this section, we first define the average hit ratio as a performance metric used throughout this paper. Then, we introduce our problem formulation in terms of maximizing the average hit ratio in our content-centric mobile network employing the personalized file preferences. For comparison, we present a baseline strategy that employs the common file preferences. Finally, we show a way to reformulate the hit ratio maximization problems into  submodular function maximization subject to matroid constraints.

\subsection{Performance Metric}~\label{section:31}
The goal of our cache-enabled mobile network is to reduce the traffic caused by the infrastructure mode in which users retrieve their requested content through the CP, which will be enabled either during off-peak times or when the content is not available via D2D communications. To this end, the performance metric of interest is the average hit ratio $H_{avg}$ at any time slot and is defined as follows~\cite{chen, hit}.

\begin{definition}[Average hit ratio $H_{avg}$]~\label{df:havg}  Let $H_{u_i,f_a}$ denote the hit probability that a user $u_i \in \mathcal{N}$ requesting a content file $f_a \in \mathcal{F}$ at any time slot retrieves the requested content either from his/her own cache or from the cache of a user which falls in a given time slot within collaboration distance $\mathcal{D}$ between the requesting user $u_i$ and another user holding the requested content file. In this case, the average hit ratio $H_{avg}$ that the content file is successfully retrieved via D2D communications is defined as
 \begin{equation} \label{eq:c}
H_{avg} = \frac{1}{m}\sum_{u_i \in \mathcal{N}} \sum_{f_a \in \mathcal{F}} P(u_i) P(f_a|u_i) H_{u_i,f_a},
\end{equation}
where $m= \sum_{u_i \in \mathcal{N}} P(u_i)$ is the normalization constant, which is used so that $H_{avg}$ lies between 0 and 1.
\end{definition}

We aim at solving our caching problems in terms of maximizing the hit ratio in \eqref{eq:c}. However, we also demonstrate that our approach offers potential gains over the baseline method in terms of outage capacity defined as follows.

\begin{definition}[Outage Capacity $\Lambda$]~\label{df:lambda} Let $O_{u_i,f_a}$ denote the probability that there is no user within the collaboration distance $\mathcal{D}$ of user $u_i$ that is storing the content $f_a$ that the user has requested. Then, the average outage probability $O_{avg}$ is given by
 \begin{equation} \label{eq:oavg}
O_{avg} = \frac{1}{m}\sum_{u_i \in \mathcal{N}} \sum_{f_a \in \mathcal{F}} P(u_i) P(f_a|u_i)  O_{u_i,f_a}, 
\end{equation}
where $m= \sum_{u_i \in \mathcal{N}} P(u_i)$ is the normalization constant. Let $\lambda$ denote the maximum number of single-hop D2D communication links that can be established simultaneously in each time slot. Then, the outage capacity $\Lambda$ is defined as
 \begin{equation} \label{eq:oc}
\Lambda = \lambda \left(1- O_{avg}\right).
\end{equation}
\end{definition} 

Note that the outage capacity $\Lambda$ quantifies the maximum number of simultaneous requests that can be successfully served by local caches at mobile users over the entire network.

\subsection{Problem Formulation}~\label{section:32}
In this subsection, we present our problem formulation for the optimal cache allocation strategy that maximizes the average hit ratio $H_{avg}$ in~\eqref{eq:c} for the content-centric mobile network employing the personalized file preferences in Section~\ref{section:21}.

Let $p_{u_i,u_j}$ denote the contact probability that that users $u_i,u_j\in\mathcal{N}$ are within the communication distance $\mathcal{D}$ at a time slot. Then, given three probability distributions of the activity level $P(u_i)$ of each user, the personalized file preference $P(f_a|u_i)$ of each user, and the contact probability $p_{u_i,u_j}$, the average hit ratio $H_{avg}$ depends solely on the cache allocation strategy $\mathcal{C}^{N\times F}$. Among all the cache allocation strategies, the optimal one will be the one that leads to the maximum $H_{avg}$. Now, from the caching constraints in \eqref{eq:Cons1} and \eqref{eq:Cons2}, the optimal cache allocation strategy $\mathcal{C}^{N\times F}$ for the content-centric mobile network employing the personalized file preferences can thus be the solution to the following optimization problem: 

\begin{subequations} \label{eq:opp}
\begin{align} \label{eq:ofp}
\!\!\!  \textbf{[P1]} \underset{\mathcal{C}^{N\times F}}{\max} \!\!\! \sum_{u_i \in \mathcal{N}}\!\! \sum_{f_a \in \mathcal{F}}\!\!\! P(u_i) P(f_a|u_i)\!\!\left(\!\!1\!\!-\!\!\!\! \prod_{u_j \in \mathcal{N}}\!\!\!\!\left(\!1\!\!-\! c_{u_j, f_a}p_{u_i,u_j}\!\right)\!\!\!\right)
\end{align}
subject to
\begin{align}
\sum_{f_a \in \mathcal{F}}c_{u_i,f_a} \leq S, \hspace{1cm} \forall u_i \in \mathcal{N},
\end{align}
\begin{align}
 c_{u_j,f_a} \in \left\{0,1\right\}, \hspace{1cm} \forall u_j \in \mathcal{N}, f_a \in \mathcal{F},
\end{align}
\end{subequations}

\noindent where $c_{u_j,f_a}$ is defined in~\eqref{eq:c_u}. Here, the  normalization constant $m$ is dropped from \eqref{eq:ofp} since it does not depend on $\mathcal{C}^{N\times F}$. The term $1- \prod_{u_j \in \mathcal{N}}\left(1- c_{u_j, f_a}p_{u_i,u_j}\right)$ in ~\eqref{eq:ofp} indicates the hit probability $H_{u_i,f_a}$ in~\eqref{eq:c} to retrieve the content file $f_a\in \mathcal{F}$ via D2D single-hop communications, which thus corresponds to the probability that the requesting user $u_i$ comes in contact with another user $u_j$ storing the desired content. This implies that $\prod_{u_j \in \mathcal{N}}\left(1- c_{u_j, f_a}p_{u_i,u_j}\right)$ is the probability of the event that there is no user storing the content $f_a$ that user $u_i$ has requested within the distance $\mathcal{D}$. If user $u_j\in \mathcal{N}$ is holding the content requested by the user that is within the communication distance $\mathcal{D}$ (i.e., $c_{u_j,f_a}=1$), then the D2D communication can be initiated.  Intuitively, increasing $\mathcal{D}$ leads to an increment of the user density within $\mathcal{D}$, which yields an enhancement of $H_{avg}$. This due to the fact that the contact probability $p_{u_i,u_j}$ in~\eqref{eq:ofp}, which denotes the probability that two users $u_i$ and $u_j$ are within the communication distance $\mathcal{D}$ increases with the user density within $\mathcal{D}$, thus resulting in the improvement on the average hit ratio $H_{avg}$.

\begin{remark}~\label{R:outage} Note that the main focus of the paper is to design the optimal caching strategies that maximize the hit ratio. Thus, the optimal solution to the problem~\eqref{eq:opp} may not guarantee the optimality in terms of other performance metrics. However, we shall empirically validate the effectiveness of our caching framework with respect to the outage capacity in Section 6.2.
\end{remark}

As described in Section~\ref{section:21}, the model parameters such as user activity levels $P(u_i)$ and user file preferences $P(f_a|u_i)$ are assumed not to change during every time period $\mathcal{T}$ so that the caching strategy $\mathcal{C}^{N\times F}$ should be iteratively determined by the CP with updated parameters after every time period $\mathcal{T}$ during the off-peak time. Fig.~\ref{fig:cpc} is an illustration of the proposed framework composed of our parameter learning and content placement cycle, in which these two major tasks are repeated for every period with $\mathcal{T}$ time slots. First, the CP learns (and updates) the model parameters such as user activity levels and user file preferences after predicting missing information (e.g., missing ratings in recommender systems) of each user based on the data collected for the past $\mathcal{T}$ time slots. Second, by using the predicted parameters, the CP determines the caching strategy $\mathcal{C}^{N\times F}$ and initiates the content placement to fill the cache of each user.

\begin{figure}[t]
\centering
 \includegraphics[height=7.5cm]{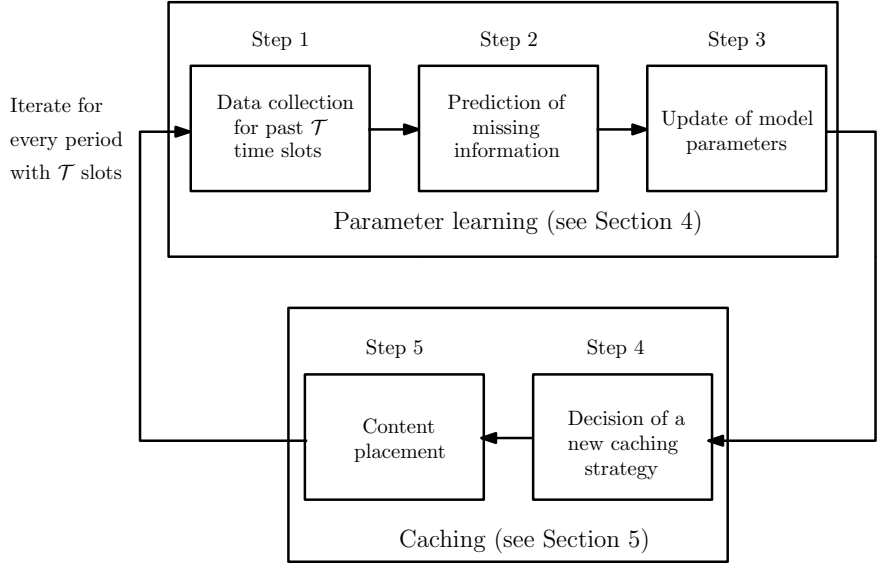}
\caption{The schematic overview of our parameter learning and  content placement cycle.}
\label{fig:cpc}
\end{figure}

\subsection{Baseline Strategy}~\label{section:33}
In this subsection, we describe a baseline caching strategy for comparison. While the focus of our study is on analyzing the impact of personalized file preferences in a content-centric mobile network, for our baseline, we use the case adopted in many previous studies~\cite{alfano,anh,milad,malik} where all users have the \textit{common} file preferences (i.e., $P(f_a|u_i)=P(f_a|u_j)= P(f_a)$), where $P(f_a)$ is the probability that a file $f_a$ is requested (i.e., the file popularity).

Now, from the caching constraints in \eqref{eq:Cons1} and \eqref{eq:Cons2}, the optimal cache allocation strategy $\mathcal{C}^{N\times F}$ for the content-centric mobile network employing the common file preferences can thus be the solution to the following optimization problem:

\begin{subequations} \label{eq:opb}
\begin{align} \label{eq:ofb}
\! \textbf{[P2]} \underset{\mathcal{C}^{N\times F}}{\max} \!\!\! \sum_{u_i \in \mathcal{N}} \!\! \sum_{f_a \in \mathcal{F}} P(u_i) P(f_a)\!\!\left(\!1\!\!- \!\!\!\!\prod_{u_j \in \mathcal{N}}\!\!\!\!\left(\!1\!\!-\! c_{u_j, f_a}p_{u_i,u_j}\!\right)\!\!\right)
\end{align}
subject to
\begin{align}
\sum_{f_a \in \mathcal{F}}c_{u_i,f_a} \leq S, \hspace{1cm} \forall u_i \in \mathcal{N},
\end{align}
\begin{align}
 c_{u_j,f_a} \in \left\{0,1\right\}, \hspace{1cm} \forall u_j \in \mathcal{N}, f_a \in \mathcal{F},
\end{align}
\end{subequations}

\noindent where $p_{u_i,u_j}$ is the contact probability that two users $u_i,u_j\in\mathcal{N}$ are within the communication distance $\mathcal{D}$ and $c_{u_j,f_a}$ is defined in~\eqref{eq:c_u}.

In practice, since users have different content file preferences, incorporating the knowledge of personalized file preferences into the cache allocation problem is expected to bring a substantial performance gain. This implies that content placement according to the caching strategy via the solution to the optimization problem in \eqref{eq:opb} will not perform better than the one obtained by solving the optimization problem in \eqref{eq:opp}, which will be empirically demonstrated in Section~\ref{section:6}.

\subsection{Problem Reformulation Using Submodular Properties}~\label{section:34}
In this subsection, we show that each of the optimization problems in~\eqref{eq:opp} and~\eqref{eq:opb} can be formulated as optimization (i.e., maximization) of a submodular function subject to matroid constraints. Since the problems in~\eqref{eq:opp} and~\eqref{eq:opb} are NP-hard~\cite{hit} and thus solving them for large system parameters such as $N$ and $F$ may not be tractable, a common way to solve these problems is the use of computationally efficient greedy algorithms. The computational efficiency comes at the cost of performance degradation. In this context, the problem reformulation enables us to obtain an approximate solution within provable gaps from greedy algorithms. Implementation details of greedy algorithms and the corresponding level of approximation that they can achieve compared to the optimal solution  will be specified in Section~\ref{section:5}.

First, we begin with formulating an equivalent optimization problem in~\eqref{eq:opp} similarly as in~\cite{hit}. Let $\mathcal{W} = \mathcal{N} \times \mathcal{F}$ denote a ground set, where $(u_i, f_a) \in \mathcal{W}$ represents the configuration that user $u_i \in \mathcal{N}$ caches file $f_a \in \mathcal{F}$ and the ground set $\mathcal{W}$ contains all possible configurations of file placement. Then, we aim to find the subset of $\mathcal{W}$
that maximizes the average hit ratio $H_{avg}$ under the constraint~\eqref{eq:Cons1}. We define a discrete set function $G : 2^{\mathcal{W}}\rightarrow \mathbb{R}$ on subsets $\mathcal{V}$ of the ground set $\mathcal{W}$ as follows:
\begin{align} \label{eq:rhrp}
G(\mathcal{V})\! \triangleq\!  1-\!\!\!\!\!\!\!\! \sum_{(u_i,f_a) \in \mathcal{V}}\!\!\!\!\!\!\! P(u_i) P(f_a|u_i) \!\!\left(\prod_{(u_j,f_a) \in \mathcal{V}\cap\mathcal{W}^{(f_a)}}\!\!\!\!\!\!\!\!\!\!\!\!\!\!\!\!\left(1- p_{u_i,u_j}\right)\!\!\right),
\end{align}
\noindent where $\mathcal{W}^{(f_a)}=\left\{(u_i, f_a): u_i \in\mathcal{N}\right\}$ indicates the possible placement configuration for  content file $f_a$ and $G(\mathcal{V})$ is a discrete set function, which gives us the average hit ratio for given subsets $\mathcal{V}$ (i.e., the caching strategy) of the ground set $\mathcal{W}$, when personalized file preferences $P(f_a|u_i)$ of users are available . Then, from~\eqref{eq:rhrp}, we are capable of establishing the following equivalent optimization problem to the original one in~\eqref{eq:opp}:
\begin{subequations} \label{eq:ropp}
\begin{align} \label{eq:rofp}
\textbf{[P3]}   \quad\quad\quad \quad \underset{\mathcal{V} \in \mathcal{J}}{\max} \hspace{0.1cm}   G(\mathcal{V})  \quad\quad\quad\quad
\end{align}
subject to
\begin{align} \label{eq:rCons1}
\left|\mathcal{V}\cap\mathcal{W}_{(u_i)}\right| \leq S \hspace{1cm} \forall \mathcal{V}\in \mathcal{J}, u_i\in \mathcal{N},
\end{align}
\end{subequations}
where $\mathcal{J} \in 2^{\mathcal{W}}$ is a family of feasible sets that satisfies the cache size constraint~\eqref{eq:rCons1} and $\mathcal{W}_{(u_i)}=\left\{(u_i, f_a): f_a \in\mathcal{F}\right\}$ indicates the possible cache placement configuration for user $u_i$'s cache. Note that each user's cache size constraint in~\eqref{eq:rCons1} is a matroid constraint corresponding to~\eqref{eq:Cons1}. We are now ready to show the following submodular property.
\begin{lemma}\label{le:eopp}
The function $G(\mathcal{V})$ in~\eqref{eq:rhrp} is a monotone non-decreasing submodular function as adding one more configuration to given configuration subsets (i.e., $(u_i, f_a) \cup \mathcal{V}$) leads to an increment of $G(\mathcal{V})$.
\end{lemma}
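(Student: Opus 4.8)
The plan is to establish both assertions in the claim—the monotone non-decreasing property and submodularity—directly from the product structure of $G$ in~\eqref{eq:rhrp}, by isolating how a single added configuration perturbs the inner products. For each requesting pair $(u_i,f_a)$ define the ``no-hit'' quantity
\[
Q_{u_i,f_a}(\mathcal{V}) \triangleq \prod_{(u_j,f_a)\in \mathcal{V}\cap\mathcal{W}^{(f_a)}}\bigl(1-p_{u_i,u_j}\bigr),
\]
i.e.\ the probability that no user caching $f_a$ is within reach of $u_i$. Because each factor $1-p_{u_i,u_j}$ lies in $[0,1]$, we have $Q_{u_i,f_a}(\mathcal{V})\in[0,1]$, and the entire dependence of $G(\mathcal{V})$ on $\mathcal{V}$ is carried by these products, with $G(\mathcal{V})$ equal to a constant minus a $P(u_i)P(f_a|u_i)$-weighted sum of them. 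Since the weights are non-negative, both properties of $G$ will follow once the corresponding properties of the $Q_{u_i,f_a}$ are in hand.

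First I would compute the marginal effect of inserting one configuration $e=(u_k,f_b)\notin\mathcal{V}$. The crucial bookkeeping step is that, because the product in $Q_{u_i,f_a}$ is restricted to a single file via $\mathcal{W}^{(f_a)}$, inserting $e$ changes only the products associated with the same file $f_b$, each of which picks up exactly one extra factor:
\[
Q_{u_i,f_b}(\mathcal{V}\cup\{e\}) = \bigl(1-p_{u_i,u_k}\bigr)\,Q_{u_i,f_b}(\mathcal{V}),
\]
while $Q_{u_i,f_a}(\mathcal{V})$ is untouched for $f_a\neq f_b$. Substituting this into $G$, the marginal gain reduces to $\sum_{u_i} P(u_i)P(f_b|u_i)\,p_{u_i,u_k}\,Q_{u_i,f_b}(\mathcal{V})$, a sum of non-negative terms, hence $\ge 0$; this proves the monotone non-decreasing property asserted in the lemma.

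Submodularity then follows by reusing the very same marginal-gain expression for two nested configuration sets $\mathcal{V}\subseteq\mathcal{U}$ with $e\notin\mathcal{U}$. The two gains have an identical weighted-sum form and differ only through the factor $Q_{u_i,f_b}(\cdot)$. Since $\mathcal{U}$ contains at least as many placements of $f_b$ as $\mathcal{V}$, the product $Q_{u_i,f_b}(\mathcal{U})$ merely accumulates additional factors in $[0,1]$ relative to $Q_{u_i,f_b}(\mathcal{V})$, so $Q_{u_i,f_b}(\mathcal{U})\le Q_{u_i,f_b}(\mathcal{V})$ for every $u_i$. Comparing the two gains term by term then yields $G(\mathcal{U}\cup\{e\})-G(\mathcal{U})\le G(\mathcal{V}\cup\{e\})-G(\mathcal{V})$, i.e.\ the diminishing-returns inequality, which is equivalent to submodularity.

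The step I expect to be the main obstacle—more a matter of care than of genuine difficulty—is the file-wise decoupling: one must argue explicitly that adding $(u_k,f_b)$ perturbs only the $f_b$-indexed products (and appends precisely the factor $1-p_{u_i,u_k}$), since it is this decoupling that simultaneously produces the clean marginal-gain formula and the term-by-term monotonicity $Q_{u_i,f_b}(\mathcal{U})\le Q_{u_i,f_b}(\mathcal{V})$. I would state this observation as a preliminary remark before the algebra. Alternatively, one could invoke the standard fact that any non-negative weighted sum of probabilistic coverage functions of the form $1-\prod(1-p)$ is monotone and submodular, but the direct calculation above keeps the argument self-contained.
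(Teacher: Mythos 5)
Your argument is correct and self-contained: the file-wise decoupling, the marginal-gain identity $G(\mathcal{V}\cup\{e\})-G(\mathcal{V})=\sum_{u_i}P(u_i)P(f_b|u_i)\,p_{u_i,u_k}\,Q_{u_i,f_b}(\mathcal{V})\ge 0$, and the term-by-term comparison $Q_{u_i,f_b}(\mathcal{U})\le Q_{u_i,f_b}(\mathcal{V})$ for $\mathcal{V}\subseteq\mathcal{U}$ together give exactly monotonicity and diminishing returns. The paper itself does not write this argument out: its ``proof'' is a one-line deferral to Lemma~1 of the cited reference \cite{hit}, with $1/M$ replaced by the heterogeneous activity level $P(u_i)$; your write-up is essentially the standard coverage-function calculation that that citation stands in for, so you are supplying the missing details rather than taking a different route. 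One point of care: your reading of $G$ as ``a constant minus a $P(u_i)P(f_a|u_i)$-weighted sum over a \emph{fixed} index set'' does not match \eqref{eq:rhrp} as literally typeset, where the outer sum ranges over $(u_i,f_a)\in\mathcal{V}$ and would therefore gain a new (subtracted) summand whenever an element is added, breaking your clean marginal-gain formula. That literal form is evidently a typo---it cannot reproduce the objective of \textbf{[P1]}, to which \textbf{[P3]} is claimed equivalent---so your interpretation (outer sum over all of $\mathcal{N}\times\mathcal{F}$) is the intended one, but you should state explicitly that this is the form of $G$ you are proving the lemma for.
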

\begin{proof} The proof essentially follows the same steps as those in~\cite[Lemma 1]{hit} by just replacing $1/M$ (i.e., the term assuming the same activity level for all users) by the probability $P(u_i)$ with which a user $u_i$ generates a request (i.e., heterogeneous activity levels of users).
\end{proof}

Next, we move on to the baseline strategy where all users have the common file preferences (i.e., $P(f_a|u_i)=P(f_a|u_j)=P(f_a))$. According to the above problem reformulation argument, the following equivalent optimization problem to the original one in~\eqref{eq:opb} can be established:
\begin{subequations} \label{eq:ropb}
\begin{align} \label{eq:rofb}
\textbf{[P4]}   \quad\quad \quad\quad\quad \quad \underset{\mathcal{V} \in \mathcal{J}}{\max} \hspace{0.1cm}  G(\mathcal{V})  \quad\quad\quad\quad\quad 
\end{align}
subject to
\begin{align} \label{eq:rCons1b}
\left|\mathcal{V}\cap\mathcal{W}_{(u_i)}\right| \leq S \hspace{1cm} \forall \mathcal{V}\in \mathcal{J}, u_i\in \mathcal{N},
\end{align}
\end{subequations}
where $G(\mathcal{V})$ for the baseline strategy is given by
\begin{align} \label{eq:rhrb}
G(\mathcal{V}) \triangleq   1- \!\!\!\!\!\!\!\!\sum_{(u_i,f_a) \in \mathcal{V}} \!\!\!\!\!\!P(u_i) P(f_a)\!\! \left(\prod_{(u_j,f_a) \in \mathcal{V}\cap\mathcal{W}^{(f_a)}}\!\!\!\!\!\!\!\!\!\!\!\!\!\!\!\!\left(1- p_{u_i,u_j}\right)\right).
\end{align}
\begin{lemma}\label{le:eopb}
The function $G(\mathcal{V})$ in~\eqref{eq:rhrb} is a monotone non-decreasing submodular function as adding one more configuration to given configuration subsets (i.e., $(u_i, f_a) \cup \mathcal{V}$) leads to an increment of $G(\mathcal{V})$.
\end{lemma}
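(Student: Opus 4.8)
The plan is to obtain Lemma~\ref{le:eopb} as an immediate corollary of Lemma~\ref{le:eopp}, since the two set functions are structurally identical. Comparing \eqref{eq:rhrb} with \eqref{eq:rhrp}, the baseline $G(\mathcal{V})$ is obtained from the personalized one by the single substitution $P(f_a|u_i)\mapsto P(f_a)$, while the leading weight $P(u_i)$ and the entire product $\prod_{(u_j,f_a)\in\mathcal{V}\cap\mathcal{W}^{(f_a)}}(1-p_{u_i,u_j})$ are left untouched. I would first make explicit the observation that the argument establishing Lemma~\ref{le:eopp} (inherited in turn from~\cite[Lemma 1]{hit}) never uses any property of the coefficient multiplying the product beyond the fact that it is a nonnegative constant that does not depend on the chosen subset $\mathcal{V}$. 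Since $P(f_a)\in[0,1]$ is exactly such a nonnegative, $\mathcal{V}$-independent weight, the same proof transfers verbatim, and this reduction is the cleanest route to the claim.

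To keep the statement reasonably self-contained rather than purely referential, I would then spell out the two required properties directly, relying on the fact that $G(\mathcal{V})$ is a nonnegatively weighted coverage-type objective. For the monotone non-decreasing property, I would examine the marginal effect of adjoining a single configuration $(u_k,f_b)$ to $\mathcal{V}$. Because the outer sum and the inner product in \eqref{eq:rhrb} couple only pairs sharing the same file index, this operation affects solely the terms associated with file $f_b$: every pre-existing product over $\mathcal{V}\cap\mathcal{W}^{(f_b)}$ acquires one additional factor $(1-p_{u_i,u_k})\in[0,1]$, which can only shrink that product and hence, through the leading minus sign, can only raise $G$. A coverage-style accounting of these terms, together with the new contribution introduced for the index $i=k$, shows that the net marginal gain is a nonnegatively weighted increment in coverage probability, giving $G(\mathcal{V}\cup\{(u_k,f_b)\})\ge G(\mathcal{V})$.

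For submodularity I would verify the diminishing-returns inequality $G(\mathcal{V}\cup\{w\})-G(\mathcal{V})\ge G(\mathcal{V}'\cup\{w\})-G(\mathcal{V}')$ for all $\mathcal{V}\subseteq\mathcal{V}'$ and $w=(u_k,f_b)\notin\mathcal{V}'$. The key fact I would invoke is that the product $\prod_{(u_j,f_b)\in\mathcal{V}\cap\mathcal{W}^{(f_b)}}(1-p_{u_i,u_j})$ is non-increasing as $\mathcal{V}$ grows, so the residual ``uncovered'' probability that the newly added configuration $w$ is able to cover can only shrink when the base set is enlarged from $\mathcal{V}$ to $\mathcal{V}'$. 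This is the standard signature of coverage-type objectives, and the nonnegative constant weights $P(u_i)P(f_b)$ preserve the inequality under summation.

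The reassuring point is that there is essentially no genuine obstacle here: because the substitution $P(f_a|u_i)\mapsto P(f_a)$ alters only a nonnegative constant weight, the monotone submodular structure already established in Lemma~\ref{le:eopp} is automatically inherited. The only thing that must be checked with any care is precisely this weight-agnosticism, namely that the proof of Lemma~\ref{le:eopp} at no step exploits the dependence of $P(f_a|u_i)$ on the index $u_i$; once that is confirmed, the conclusion is immediate.
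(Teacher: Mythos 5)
Your proposal is correct and matches the paper's (implicit) argument: the paper states Lemma~\ref{le:eopb} without a separate proof, relying on the fact that \eqref{eq:rhrb} differs from \eqref{eq:rhrp} only by replacing the $\mathcal{V}$-independent nonnegative weight $P(f_a|u_i)$ with $P(f_a)$, so the proof of Lemma~\ref{le:eopp} (itself deferred to \cite[Lemma 1]{hit}) carries over verbatim. Your explicit check that the argument is agnostic to this weight, together with the direct coverage-type verification of monotonicity and diminishing returns, is exactly the intended reasoning, just spelled out in more detail.
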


We shall design caching algorithms based on greedy approaches in terms of maximizing a submodular function subject to the matroid constraints in Section~\ref{section:5}.

\section{Parameter Learning}~\label{section:4} 

Thus far, we have characterized the caching optimization problems in Section~\ref{section:3} under the assumption that the key model parameters such as activity levels $P(u_j)$, file preferences $P(f_a|u_j)$, and file popularity $P(f_a)$ are available. In this section, we focus on the parameter learning phase of the iterative cycle illustrated in Fig.~\ref{fig:cpc} in order to completely solve the caching optimization problems in Section~\ref{section:3}. We describe a method to learn these three parameters based on the user rating behaviors, i.e., the rating matrix containing the ratings of $N$ users to $F$ content files (items). We first present a CF-based inference approach to predicting missing ratings for acquiring a complete rating matrix, denoted by ${\bf R}_c^{N\times F}$, from the fact that users tend to give ratings to only few items. We then show how to compute three parameters using the complete rating matrix. Note that since there is a little understanding of how to apply such rating prediction to wireless systems including cache-enabled networks, we adopt a simple CF-based approach in our study (rather than more sophisticated inference methods), which is sufficient to guarantee satisfactory performance improvement over the baseline strategy (refer to Section 6.2).

\subsection{Predicting Missing Ratings }~\label{section:41}
Collaborative filtering (CF) is one of prevalent techniques widely used in recommender systems mostly to predict ratings of unrated items or to suggest the most favorable top-$N$ items for a user by analyzing users' past behaviors on items. In CF, user behaviors are presented along with \textit{ratings} (explicit) or \textit{clicks} (implicit) on content files.  In our study, we focus on predicting the missing ratings using the user--item rating matrix based on the explicit feedback and each rating reflects a user {\em preference} for a certain item. The CF-based inference is built based upon the basic assumptions that (i) two users tend to have similar behavior patterns in the future when the two users share similar behavior patterns in the past; and (ii) the user behavior patterns are consistent over time.
The CF-based techniques are generally classified into memory-based approaches including user-based CF~\cite{ubcf} and item-based CF~\cite{ibcf} and model-based approaches such as singular value decomposition (SVD)-based CF~\cite{svd} and autoencoder-based CF~\cite{auto}. In CF, the missing ratings are predicted to obtain the complete rating matrix ${\bf R}_c^{N\times F}$ containing the ratings of $N$ users to $F$ files (items).

In our study, for ease of explanation with intuition, we adopt the memory-based CF approaches that perform predictions based on the similarity among users or items ~\cite{ubcf, ibcf}.\footnote{Since our parameter learning framework is CF-agnostic, model-based approaches and deep-learning approaches can also be adopted with a slight modification.} There exist two memory-based CF techniques: user-based CF~\cite{ubcf} and item-based CF~\cite{ibcf}.

In user-based CF, the missing rating $r_{k,a}$ of user $u_k \in \mathcal{N}$ for content file $f_a \in \mathcal{F}$ is predicted using the ratings given by the users similar to the target user $u_k \in \mathcal{N}$, and the similarity between the users is measured by referring to their ratings to other items.
Let $\mathcal{U}_{k,a}$ be the set of users who have rated content file $f_a \in \mathcal{F}$ and has similar rating behaviors to the target user $u_k \in \mathcal{N}$.
Then, the predicted rating $\hat{r}_{k,a}$ of user $u_k \in \mathcal{N}$ to content file $f_a \in \mathcal{F}$ is given by
\begin{align} \label{eq:pre_ru}
 \hat{r}_{k,a} = b_{k,a}+ \frac{\sum_{i \in \mathcal{U}_{k,a}} w_{k,i}(r_{i,a}- b_{i,a})}{\sum_{i \in \mathcal{U}_{k,a}} w_{k,i}},
\end{align}
where $r_{i,a}$ denotes the rating given by user $u_i \in \mathcal{N}$ to file $f_a \in \mathcal{F}$, $b_{i,a}$ denotes the biased rating value for $u_i \in \mathcal{N}$ to file $f_a \in \mathcal{F}$, and $w_{k,i}$ denotes the similarity weight between users $u_k,u_i \in \mathcal{N}$.

In item-based CF, the missing rating $r_{k,a}$ of user $u_k\in \mathcal{N}$ for the target content file $f_a \in \mathcal{F}$ is predicted using the ratings to the items similar to the target item $f_a \in \mathcal{F}$, and the similarity between the items is measured by referring to the ratings given by users. Let $\mathcal{I}_{a,k}$ be the set of items that have been rated by user $u_k\in \mathcal{N}$ and have similar rating behaviors to the target file $f_a \in \mathcal{F}$. Then, the predicted rating $\hat{r}_{k,a}$ of user $u_k \in \mathcal{N}$ to content file $f_a \in \mathcal{F}$ is given by
\begin{align} \label{eq:pre_ri}
 \hat{r}_{k,a} = b_{k,a}+ \frac{\sum_{j \in \mathcal{I}_{a,k}} w_{a,j}(r_{k,j}- b_{k,j})}{\sum_{j \in \mathcal{I}_{a,k}} w_{a,j}},
\end{align}
where $w_{a,j}$ is the similarity weight between the content files $f_a, f_j \in \mathcal{F}$.

For the user-based CF approach, the steps to predict each missing rating to a content file $f_a \in \mathcal{F}$ by user $u_j \in \mathcal{N}$ are summarized below:
\begin{enumerate}
\item First, we measure the similarity between users (files in case of item-based CF) using the available rating values. The popular similarity measures include the Pearson correlation, cosine similarity, and Euclidean distance. As an example, the similarity $w_{k,i}$ between two users $u_k,u_i\in\mathcal{N}$ using the Pearson correlation is given by
	\begin{align}\label{eq:sim}
	 w_{k,i}=  \frac{\sum_{j \in \mathcal{J}_{k,i}} (r_{k,j}- \bar{r}_{k} )\cdot (r_{i,j}-\bar{r}_{i}) }{\sqrt{ \sum_{j \in \mathcal{J}_{k,i}} (r_{k,j}-\bar{r}_{k})^2  \cdot \sum_{j \in \mathcal{J}_{k,i}} (r_{i,j}-\bar{r}_{i})^2}} 
	\end{align}
	where $\mathcal{J}_{k,i}$ is the set of items that have been rated by both users $u_k, u_i \in \mathcal{N}$ and $\bar{r}_{k}$ is the mean value of the ratings given by user $u_k \in \mathcal{N}$.
\item Next, we use the similarity values to identify the most similar users (i.e., $\mathcal{U}_{k,a}$) to the target user $u_k \in \mathcal{N}$ (the most similar files (i.e., $\mathcal{I}_{a,k}$) to the target file $f_a \in \mathcal{F}$ in case of item-based CF), where either the top $K$ users (files in case of item-based CF) or only the users (files in case of item-based CF) whose similarities are larger than a pre-defined threshold are taken into account.
\item Finally, we calculate the missing rating of user $u_j\in \mathcal{N}$ for the content file $f_a \in \mathcal{F}$ using \eqref{eq:pre_ru} (\eqref{eq:pre_ri} in case of item-based CF).
\end{enumerate}

To further improve the prediction accuracy, it is also possible to apply both matrix factorization techniques such as SVD-based CF~\cite{svd} to learn the hidden user patterns based on the observed ratings and recently emerging deep learning techniques such as autoencoder-based CF~\cite{auto} into our prediction framework. In the next subsection, we explain how to calculate the model parameters based on the complete rating matrix $\textbf{R}_c^{N\times F}$.

\subsection{Parameter Calculation}~\label{section:42}
 In this section, we calculate user activity levels $P(u_k)$, user file preferences $P(f_a|u_k)$, and file popularity $P(f_a)$ using the rating matrix $\textbf{R}_c^{N\times F}$ collected in $\mathcal{T}$ time slots. Let $r_{k,a}$ denote the rating given by user $u_k \in \mathcal{N}$ to file $f_a \in \mathcal{F}$ and also $n_{k}$ denote the total number of ratings given (not predicted) by user $u_k \in \mathcal{N}$. Suppose that each user gives ratings to all the content files that he/she requests, which implies that $n_k$, the number of ratings given by user $u_k$,  is identical to the number of requests generated by the user. Then, the user activity level $P(u_k)$ can be expressed as
\begin{align} \label{eq:al}
 P(u_k) = \frac{n_k}{\mathcal{T}}.
\end{align}
In addition, the file preferences $P(f_a|u_k)$ of each user and the file popularity $P(f_a)$ are given by
\begin{align} \label{eq:fp}
 P(f_a|u_k) = \frac{r_{k,a}}{\sum_{i=1}^F r_{k,i}}
\end{align}
and
\begin{align} \label{eq:pop}
 P(f_a) =  \frac{\sum_{k=1}^N r_{k,a}}{\sum_{k=1}^N\sum_{i=1}^F r_{k,i}},
\end{align}
respectively.

\begin{figure}[t]
\centering
 \includegraphics[height=7.5cm]{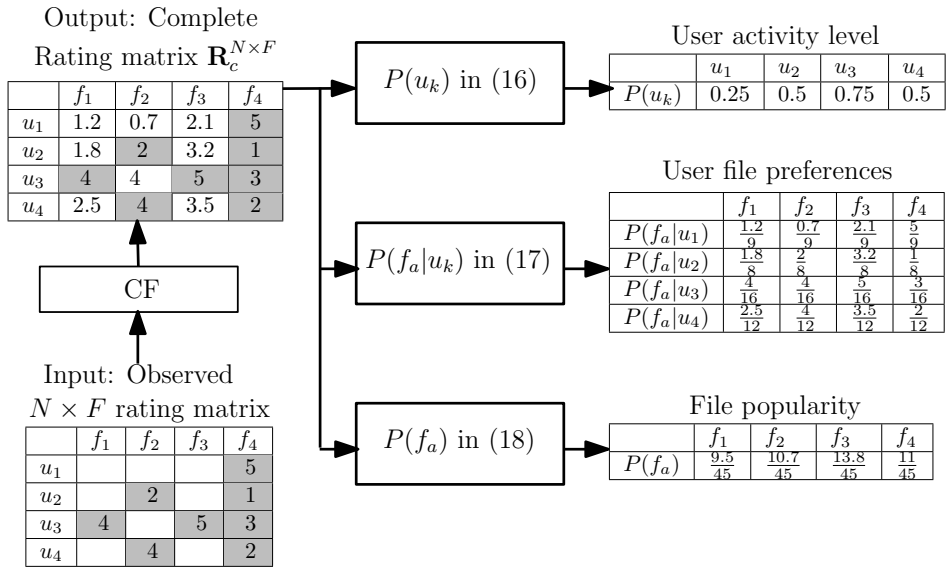}
\caption{Block diagram of parameter learning with an example of $\mathcal{T}=N=F=4$.}
\label{fig:block}
\end{figure}

Fig. \ref{fig:block} illustrates the parameter learning procedure. Consider an example of $N=F=\mathcal{T}=4$ as shown in Fig. \ref{fig:block}. We start with the observed $4 \times 4$ rating matrix containing the missing ratings and let the matrix pass through one of CF algorithms for predicting the missing ratings. The output of the CF algorithm is a complete rating matrix $\textbf{R}_c^{4 \times4}$. Then, user activity levels can be calculated from~\eqref{eq:al}. For example, the probability that user $u_1$ generates a request is $P(u_1)=0.25$ as the total number of ratings given (not predicted) by user $u_1$ in $4$ time slots is $n_1=1$. Next, user file preferences can be calculated from~\eqref{eq:fp}. For example, the probability that user $u_2$ prefers file $f_2$ at the moment of requesting the content is $P(f_2|u_2)=2/8$ as the rating given by user $u_2$ to file $f_2$ is $r_{2,2}=2$ and the sum of all the ratings of user $u_2$ to all files is  $\sum_{i=1}^4 r_{2,i}=8$. Lastly, the file popularity can be calculated from~\eqref{eq:pop}. For example, the probability that file $f_4$ is preferred at the moment of requesting the content is $P(f_4)=11/45$ as the sum of ratings given by all users to file $f_4$ is $\sum_{k=1}^4 r_{k,4}=11$ and the sum of all the ratings given by all users to all four files is $\sum_{k=1}^4\sum_{i=1}^4 r_{k,i}=45$.

\begin{remark}~\label{R:0}
Let us discuss the difference between our study and~\cite{chen} in terms of the basic definitions in \eqref{eq:al}--\eqref{eq:pop}. In~\cite{chen}, the activity level of user $u_k\in\mathcal{N}$ is given by the ratio of the number of requests generated by $u_k$ to the total number of requests generated by all the users (i.e., $P(u_k)=\frac{n_k}{\sum_{j=1}^N n_j}$); whereas in our study, we measure how frequently each user has been active with content requests, where the activity level of one user is independent of other users' activities. Additionally, in~\cite{chen}, user file preferences $P(f_a|u_k)$ and file popularity $P(f_a)$ are given by $\frac{n_{k,a}}{ \sum_{i=1}^F n_{k,i}}$ and $\sum_{k=1}^N\frac{n_{k,a}}{ \sum_{\tilde{k}=1}^N \sum_{i=1}^F n_{\tilde{k},i}}$, respectively, where $n_{k,a} \geq 0$ is the cumulative number of requests from $u_k \in \mathcal{N}$ to content file $f_a \in \mathcal{F}$. On the other hand, in our study, we utilize the user ratings as a metric to learn the above model parameters instead of using the number of times that each user accesses a file. Although the frequency of accessing a file may reflect user file preferences, it is true only such types of content files that tend to be accessed repeatedly such as music videos. This assumption may not hold for other types of content files which do not need to be accessed more than once as a user tends to watch them only once. In this context, the user rating history would be proper for both types of content objects stated above and can reflect a better understanding of each user behavior.
\end{remark}

\section{Caching Algorithms }~\label{section:5}
In this section, we present the implementation details of several caching algorithms to solve the problems in Section~\ref{section:3} by focusing on the content placement phase of the iterative cycle illustrated in Fig.~\ref{fig:cpc}. We also analyze the computational complexity for each algorithm while showing the corresponding level of the approximation that our greedy approach can achieve compared to the optimal solution. We first show an approach via exhaustive search to obtaining the optimal solution to the problems in~\eqref{eq:ropp} and~\eqref{eq:ropb}. Moreover, we present two computationally efficient heuristic algorithms at the cost of slight performance degradation compared to the optimal approach.

\subsection{Optimal Approach}~\label{section:51}
This subsection deals with an exhaustive search method over all feasible caching strategies to obtain the optimal caching strategy by solving the problems in~\eqref{eq:ropp} and~\eqref{eq:ropb}, which is summarized in the Optimal Algorithm. For given input parameters such as the activity levels $P(u_j)$, the user file preferences $P(f_a|u_j)$ (the file popularity $P(f_a)$ for the baseline strategy), and the contact probability $p_{u_i,u_j}$, the algorithm starts with an empty set $\hat{\mathcal{V}}$, representing the optimal caching strategy, and zero cost (i.e., hit ratio). The algorithm runs for $\left|\mathcal{J}\right|$ iterations, where $\mathcal{J}$ is a family of feasible subsets (i.e, caching strategies) that satisfies the storage capacity constraint~\eqref{eq:rCons1} and $\left|\mathcal{J}\right| = \mathcal{O}(2^{NF})$. At each iteration, the hit ratio $G(\mathcal{V})$ of a feasible subset $\mathcal{V} \in \mathcal{J}$ is computed according to~\eqref{eq:rhrp} (\eqref{eq:rhrb} for the baseline strategy) and the subset $\mathcal{V} \in \mathcal{J}$ with the maximum hit ratio $G(\mathcal{V})$ is selected as an optimal caching strategy $\hat{\mathcal{V}}$. 

Next, we turn to analyzing the computational complexity of the Optimal Algorithm.
\begin{remark}\label{R:1}
The time complexity of the Optimal Algorithm is  $\mathcal{O}(NF 2^{NF})$ as the algorithm runs for $\mathcal{O}(2^{NF})$ iterations (i.e., the total number of subsets representing all the possible configurations that user $u_i$ caches file $f_a$ for $\forall \ u_i \in \mathcal{N}$ and $f_a \in \mathcal{F}$ is $2^{NF}$). In each iteration, the evaluation of $G(\mathcal{V})$ takes $NF$ basic operations according to~\eqref{eq:rhrp} (\eqref{eq:rhrb} for the baseline strategy). This finally gives us the time complexity of $\mathcal{O}(NF2^{NF})$.
\end{remark}
The high complexity of this algorithm makes it impossible to find the optimal solution within polynomial time even for small system parameters such as $N$, $F$, and $S$. To overcome this issue, we would like to introduce computationally efficient algorithms in the next subsection.

\begin{table}[t]
\centering
\begin{tabular}{p{15cm}}
\hline
Optimal Algorithm \\
\hline
\textbf{Input:} $P(u_j\!)$, $P(f_a|u_j\!)$, $P(f_a\!)$, and $p_{u_i,u_j}$, \!$\forall\  u_j \! \in \! \mathcal{N}, f_a \! \in \! \mathcal{F}$ \\
\textbf{Output:} $\hat{\mathcal{V}}$ \\
\textbf{Initialization:} $\hat{\mathcal{V}} \leftarrow \emptyset$; $cost \leftarrow 0$ \\
01: \textbf{for} each $\mathcal{V} \in \mathcal{J}$  \textbf{do}\\
02: \ \ \ Calculate $G(\mathcal{V})$ \\
03: \ \ \ \textbf{if} $cost < G(\mathcal{V})$ \textbf{then}\\
04:	\ \ \ \ \ $\hat{\mathcal{V}} \leftarrow \mathcal{V}$ \\
05:	\ \ \ \ \ $cost \leftarrow G(\mathcal{V})$ \\
06:	\ \ \ \textbf{end if} \\
07:	\textbf{end for} \\
\hline
\end{tabular}
\end{table}

\begin{table}[t]
\centering
\begin{tabular}{p{15cm}}
\hline
Greedy Algorithm ({\bf G1} Algorithm) \\
\hline
\textbf{Input:} $P(u_j\!)$, $P(f_a|u_j\!)$, $P(f_a\!)$, and $p_{u_i,u_j}$, $\forall u_j \! \in \!\mathcal{N}, f_a \in \!\mathcal{F}$ \\
\textbf{Output:} $\mathcal{V}$ \\
\textbf{Initialization:} $\mathcal{V} \leftarrow \emptyset$;  $\mathcal{F}_{u_i} \leftarrow \mathcal{F}$, $\forall\  u_i \in \mathcal{N}$ ; $\mathcal{U}\leftarrow \mathcal{N}$ \\
01: \textbf{for} $k$ \textbf{from} 1 \textbf{to} $S\times N$ \textbf{do} \\
02: \ \ $[u_{\hat{i}},\! f_{\hat{a}}]\!\! \leftarrow \!\! \arg \max_{u_i \in \mathcal{U}, f_a \in \mathcal{F}_{u_i} }\!\! \left(G(\mathcal{V}\cup (u_i, f_a))\!\! -\!\!G(\mathcal{V}) \right)$ \\
03: \ \ $\mathcal{V} \leftarrow \mathcal{V}\cup (u_{\hat{i}}, f_{\hat{a}})$ \\
04: \ \ $\mathcal{F}_{u_{\hat{i}}} \leftarrow \mathcal{F}_{u_{\hat{i}}} \symbol{92}  f_{\hat{a}}$ \\
05: \ \ \textbf{if} $\left| \mathcal{F}_{u_{\hat{i}}} \right| = F-S$ \textbf{then} \\
06: \ \ \ \ $\mathcal{U} \leftarrow \mathcal{U} \symbol{92} u_{\hat{i}}$ \\
07: \ \ \textbf{end if} \\
08: \textbf{end for} \\
\hline
\end{tabular}
\end{table}

\subsection{Computationally Efficient Approaches}~\label{section:52}
In this subsection, we present two types of efficient algorithms along with analysis of their computational complexities.

\subsubsection{Greedy Algorithm ({\bf G1} Algorithm)}~\label{section:521}
We first present a greedy approach, which is summarized in the {\bf G1} Algorithm. For given input parameters $P(u_j)$, $P(f_a|u_j)$, $P(f_a)$, and $p_{u_i,u_j}$, the algorithm starts by initially setting $\mathcal{V}$, $\mathcal{F}_{u_i}$, and $\mathcal{U}$ to $\emptyset$, $\mathcal{F}$, $\mathcal{N}$, respectively, where $\mathcal{F}_{u_i}$ denotes the set of content files not cached by user $u_i \in\mathcal{N}$ and $\mathcal{U}$ denotes the set of users with residual cache space. Given the system parameters $N$, $F$, and $S$, the algorithm runs for $SN$ iterations. In each iteration, one feasible element $(u_{i}, f_{a}) \notin \mathcal{V}$ that satisfies the storage capacity constraint~\eqref{eq:rCons1} is added and the highest marginal increase in the hit ratio is offered if added to the set $\mathcal{V}$.

Now, we analyze the performance gap between the {\bf G1} Algorithm and the Optimal Algorithm in Section~\ref{section:51} as follows.
\begin{prop}\label{p:g1}
The caching strategy obtained from the {\bf G1} Algorithm achieves the hit ratio within the factor of $\frac{1}{2}$ of the optimum.
\end{prop}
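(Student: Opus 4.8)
The plan is to recognize the reformulated problem \textbf{[P3]} in~\eqref{eq:ropp} as an instance of maximizing a monotone non-decreasing submodular function over the independent sets of a matroid, and then to invoke the classical greedy guarantee of Fisher, Nemhauser, and Wolsey~\cite{greedy1}, which certifies that the natural greedy procedure attains at least half of the optimal value in this setting. The submodularity hypothesis is already supplied by Lemma~\ref{le:eopp}, which establishes that $G(\mathcal{V})$ in~\eqref{eq:rhrp} is monotone non-decreasing and submodular; hence the first ingredient of the cited theorem is in hand, and the remaining work is purely structural.

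The key structural step is to verify that the feasible family $\mathcal{J}$ carved out by the cache-size constraint~\eqref{eq:rCons1} is precisely the family of independent sets of a \emph{partition matroid} on the ground set $\mathcal{W}=\mathcal{N}\times\mathcal{F}$. To see this, I would observe that the blocks $\{\mathcal{W}_{(u_i)}\}_{u_i\in\mathcal{N}}$ partition $\mathcal{W}$, since each configuration $(u_i,f_a)$ belongs to exactly one user block, and that a subset $\mathcal{V}$ is feasible if and only if it contains at most $S$ elements from each block. I would then check the three matroid axioms: $\emptyset\in\mathcal{J}$ holds trivially; downward closure is immediate because deleting elements can only decrease per-block counts; and the exchange property follows because, for any two feasible sets with $|\mathcal{V}_1|<|\mathcal{V}_2|$, some block must be strictly less saturated in $\mathcal{V}_1$ than in $\mathcal{V}_2$, thereby furnishing an element of $\mathcal{V}_2\setminus\mathcal{V}_1$ whose addition preserves feasibility of $\mathcal{V}_1$.

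Next, I would confirm that the {\bf G1} Algorithm coincides exactly with the standard matroid greedy: it initializes with the empty set and, at each iteration (line~02), adds the still-admissible configuration of maximum marginal gain $G(\mathcal{V}\cup(u_i,f_a))-G(\mathcal{V})$, where the bookkeeping sets $\mathcal{F}_{u_i}$ and $\mathcal{U}$ restrict the search to elements that keep $\mathcal{V}$ in $\mathcal{J}$. Since monotonicity of $G$ guarantees nonnegative marginal gains and every base of this partition matroid has cardinality $SN$ (assuming $F\geq S$), the loop of $SN$ iterations terminates precisely at a maximal feasible set, i.e., a base. Applying the $\tfrac{1}{2}$-approximation theorem of~\cite{greedy1} for greedy maximization of a monotone submodular function subject to a single matroid constraint then yields $G(\mathcal{V})\geq \tfrac{1}{2}\,G(\hat{\mathcal{V}})$, which is exactly the claimed bound.

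The one delicate point I expect to be the main obstacle is the matroid verification itself, specifically ensuring that~\eqref{eq:rCons1} encodes a \emph{single} partition matroid rather than an intersection of several constraints. This distinction is essential, because the greedy guarantee degrades to $\tfrac{1}{p+1}$ under an intersection of $p$ matroids, so the factor $\tfrac{1}{2}$ hinges entirely on the per-user constraints decoupling across the disjoint blocks $\mathcal{W}_{(u_i)}$. Once this decoupling is confirmed, the proposition reduces to a direct application of the classical result, with no further calculation required.
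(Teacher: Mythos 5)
Your proposal is correct and follows essentially the same route as the paper, which simply asserts that the claim follows from the greedy guarantee of~\cite{greedy1} because {\bf G1} operates on the reformulated submodular problem. Your version is in fact more complete than the paper's one-line proof: you explicitly verify that the per-user cache constraints~\eqref{eq:rCons1} define a single partition matroid over the disjoint blocks $\mathcal{W}_{(u_i)}$ (rather than an intersection of matroids, which would weaken the bound to $\tfrac{1}{p+1}$), and that the bookkeeping in {\bf G1} realizes the standard matroid greedy, both of which the paper leaves implicit.
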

\begin{proof}
The proof technique basically lies in the similar steps to those established in~\cite{greedy1} since the {\bf G1} Algorithm is designed based on the reformulated problem using submodular properties.
\end{proof}
Next, we turn to analyzing the computational complexity of the {\bf G1} Algorithm.
\begin{remark}\label{R:2}
The time complexity of {\bf G1} Algorithm is $\mathcal{O}(SN^3F^2)$ as the algorithm runs for $SN$ iterations (refer to lines 1--8). Each iteration involves evaluating the marginal value (i.e., $G\left(\mathcal{V}\cup (u_i, f_a)\right)$ $-$ $G(\mathcal{V})$ in line 2) of at most $NF$ elements, that is, the number of unassigned elements $(u_i, f_a)$ in each iteration is given by $\sum_{u_i \in \mathcal{U}} \left| \mathcal{F}_{u_i}\right|$, and each evaluation of $G(\mathcal{V})$ takes $NF$ basic operations. This finally gives us the time complexity of $\mathcal{O}(SN(NF)NF)=\mathcal{O}(SN^3F^2)$.
\end{remark}

\begin{remark}\label{R:3}
Note that there exist algorithms that can perform better than the {\bf G1} Algorithm at the cost of higher time complexity. In \cite{greedy2}, a greedy algorithm was designed in such a way that maximizes a general monotone submodular function subject to matroid constraints, while guaranteeing an approximate solution within the factor of $1-\frac{1}{e}$ of the optimum. The approximate solution is briefly described according to the following two steps. First, the combinatorial integer programming problem is replaced with a continuous problem by relaxing the integer value constraint, and then the solution to the continuous problem is found. Second, the pipage rounding method is applied to the solution to the continuous problem to obtain a feasible approximate solution. Although this algorithm guarantees a better approximate solution within the factor of $1-\frac{1}{e}$ compared to the {\bf G1} Algorithm, this comes at the cost of high computational complexity as the time complexity of this algorithm is $\mathcal{O}((NF)^8)$. The high time complexity of this algorithm makes it impractical for large-scale mobile networks.
\end{remark}

Since the complexity of the {\bf G1} Algorithm may be still high for large system parameters $N$, $F$, and $S$, we shall seek a possibility to further reduce the complexity with acceptable performance degradation. To this end, we design a faster heuristic algorithm that has almost comparable performance to that of the {\bf G1} Algorithm in the next subsection.

\subsubsection{Greedy-Like Algorithm ({\bf G2} Algorithm)}~\label{section:522}
We introduce a more computationally efficient approach designed in a greedy-like manner, which is summarized in the {\bf G2} Algorithm. For given input parameters $P(u_j)$, $P(f_a|u_j)$, $P(f_a)$, $p_{u_i,u_j}$, and the number of iterations, denoted by $l$, the algorithm starts by initially setting $\mathcal{V}$ , $\mathcal{F}_u$, and $\mathcal{U}$ to $\left\{\mathcal{A}_{u_1}\cup \mathcal{A}_{u_2} \cup \cdots \cup \mathcal{A}_{u_N}\right\}$, $\mathcal{F}\setminus \mathcal{A}_{u_i}$, and $\emptyset$, respectively, where $\mathcal{A}_{u_i}$ is the set of $S$ most preferred content files for user $u_i \in\mathcal{N}$, $\mathcal{F}_{u_i}$ denotes the set of content files not cached by user $u_i \in\mathcal{N}$, and $\mathcal{U}$ denotes the set of users with residual cache space. Given the system parameters $N$, $F$, and $S$, the algorithm runs for $l$ iterations, each of which is divided into two parts. The first part runs $N$ times, where at each time, an element $(u_{i}, f_{a}) \in \mathcal{V}$ is removed from the set $\mathcal{V}$ that offers the  lowest marginal decrease in the hit ratio if removed from the set $\mathcal{V}$. Then, the second part runs $N$ times, where at each time, one feasible element $(u_{i}, f_{a}) \notin \mathcal{V}$ that satisfies the storage capacity constraint~\eqref{eq:rCons1} is added and the highest marginal increase in the hit ratio is offered if added to the set $\mathcal{V}$.
\begin{table}[t]
\centering
\begin{tabular}{p{15cm}}
\hline
Greedy-Like Algorithm ({\bf G2} Algorithm) \\
\hline
\textbf{Input:} \! $l$, \!$P(u_j\!)$, \!$P(f_a|u_j\!)$, \!$P(f_a\!)$, and $p_{u_i,u_j}$, \!$\forall\ \!u_j\! \in\! \mathcal{N}, f_a \!\in \!\mathcal{F}$ \\
\textbf{Output:} $\mathcal{V}$ \\
\textbf{Initialization:} $\mathcal{V} \leftarrow$ $\left\{\mathcal{A}_{u_1}\cup \mathcal{A}_{u_2} \cup \cdots \cup \mathcal{A}_{u_N}\right\}$; $\mathcal{U}\leftarrow \emptyset$; $\mathcal{F}_{u_i} \leftarrow \mathcal{F} \symbol{92} \mathcal{A}_{u_i}$, $\forall\  u_i \in\mathcal{N}$  \\
01: \textbf{for} $k$ \textbf{from} 1 \textbf{to} $l$ \textbf{do} \\
02: \ \ \textbf{for} $j$ \textbf{from} 1 \textbf{to}  $N$ \textbf{do} \\
03: \ \ \ \ $[u_{\hat{i}}, f_{\hat{a}}] \!\! \leftarrow \!\! \arg \min_{(u_i,f_a) \in \mathcal{V} } \left(G(\mathcal{V}) -G(\mathcal{V} \symbol{92} (u_i, f_a)) \right)$ \\
04: \ \ \ \ $\mathcal{V} \leftarrow \mathcal{V}\symbol{92} (u_{\hat{i}}, f_{\hat{a}})$ \\
05: \ \ \ \ $\mathcal{F}_{u_{\hat{i}}} \leftarrow \mathcal{F}_{u_{\hat{i}}} \cup  f_{\hat{a}}$ \\
06: \ \ \ \ \textbf{if} $\left| \mathcal{F}_{u_{\hat{i}}} \right| > F-S$ \textbf{and} $u_{\hat{i}}\notin \mathcal{U}$ \textbf{then} \\
07: \ \ \ \ \ \ $\mathcal{U} \leftarrow \mathcal{U} \cup u_{\hat{i}}$ \\
08:\ \ \ \ \ \textbf{end if} \\
09:\ \ \ \textbf{end for} \\
10:\ \ \ \textbf{for} $q$ \textbf{from} 1 \textbf{to}  $N$ \textbf{do} \\
11:\ \ \ \ \ $[u_{\hat{i}}, f_{\hat{a}}]\!\! \leftarrow\!\! \arg \max_{u_i \in \mathcal{U}, f_a \in \mathcal{F}_{u_i} } \left(G(\mathcal{V}\cup (u_i, f_a)) -G(\mathcal{V}) \right)$ \\
12:\ \ \ \ \ $\mathcal{V} \leftarrow \mathcal{V}\cup (u_{\hat{i}}, f_{\hat{a}})$ \\
13:\ \ \ \ \ $\mathcal{F}_{u_{\hat{i}}} \leftarrow \mathcal{F}_{u_{\hat{i}}} \symbol{92}  f_{\hat{a}}$ \\
14:\ \ \ \ \ \textbf{if} $\left| \mathcal{F}_{u_{\hat{i}}} \right| = F-S$ \textbf{then} \\
15:\ \ \ \ \ \ \ $\mathcal{U} \leftarrow \mathcal{U} \symbol{92} u_{\hat{i}}$ \\
16:\ \ \ \ \ \textbf{end if} \\
17:\ \ \ \textbf{end for} \\
18: \textbf{end for} \\

\hline
\end{tabular}
\end{table}

Next, let us analyze the computational complexity of the {\bf G2} Algorithm.
\begin{remark}\label{R:4}
The complexity of the {\bf G2} Algorithm is $\mathcal{O}(N^3F^2)$ as the algorithm runs for $l$ iterations (refer to lines 1--8). The first part of each iteration involves evaluating the marginal value (i.e., $G\left(\mathcal{V}\right) - G(\mathcal{V}\symbol{92}(u_i, f_a))$  in line 3) of $SN$ elements $N$ times. The second part of each iteration involves evaluating the marginal value  (i.e., $G\left(\mathcal{V}\cup (u_i, f_a)\right) - G(\mathcal{V})$ in line 11) of at most $NF$ elements, that is, $\sum_{u_i \in \mathcal{U}} \left| \mathcal{F}_{u_i} \right|$, $N$ times. Since each evaluation of $G(\mathcal{V})$ takes $NF$ basic operations, this finally gives us the time complexity of $\mathcal{O}(lNF(SN^2+ N^2F)) =\mathcal{O}(N^3F^2)$. The {\bf G2} Algorithm is computationally faster than the {\bf G1} Algorithm as $l$ is just a constant that does not scale with other system parameters.
\end{remark}

\subsubsection{Comparison With the Prior Work in \cite{femto}}~\label{section:523}
In this subsection, we compare our caching method with the so-called {\em FemtoCaching} system in \cite{femto}. Our study basically follows the same arguments as those in \cite{femto} in the sense of formulating a problem as the maximization of a submodular function and designing a greedy algorithm with an approximate solution within a provable gap. However, there are fundamental differences between two, which are summarized as follows.
\begin{itemize}
\item {\bf Network model}: Femto-cell static network and D2D mobile network models are adopted in \cite{femto} and our study, respectively. Moreover, a fixed network topology was assumed in \cite{femto} by using a single instance of connectivity of helpers and users, while dynamic network topologies are assumed exploiting the statistics (.e.g., the contact probability) of mobile users' connectivity under the protocol model in our study.
\item {\bf Caching model}: Content files are assumed to be cached at static helper nodes (i.e., dedicated small-cell base stations) and at cache-enabled mobile users in \cite{femto} and our work, respectively.
\item {\bf Performance metric}: The average per-bit downloading delay was used in \cite{femto}, whereas the average hit ratio and outage capacity are adopted in our study.
\item {\bf Model parameters}: In \cite{femto}, analysis was conducted based on the uniform user activity levels (i.e., the case where users are always requesting content files) and the common file popularity. In contrast, in our study, a heterogeneous activity level of each user is employed from the fact that it is necessary to distinguish high and low activity users in order to design an effective caching strategy due to the limited cache size at each mobile user; and the personalized file preferences are employed along with their inference using the CF-based approach.
\end{itemize}

\section{Experimental Evaluation}~\label{section:6}
In this section, we perform data-intensive simulations with finite system parameters $N$, $F$, $S$, and $\mathcal{D}$ for obtaining numerical solutions to the optimization problems in \eqref{eq:ropp} and \eqref{eq:ropb}. We compare the results of the proposed caching strategies in Section~\ref{section:32} with the baseline strategies in Section~\ref{section:33} to analyze the impact of personalized file preferences in our content-centric mobile network. We first elaborate on our experimental setup including the real-world dataset and network settings. Then, we present the numerical results according to various system parameters. 

\subsection{Experimental Setup}~\label{section:61}

\subsubsection{Dataset and Pre-Filtering}~\label{section:611}
We use MovieLens 1M\footnote{http://grouplens.org/datasets/movielens.} for our experiments, which is one of publicly available real-world datasets for evaluating movie recommender systems\cite{ibcf, ubcf,svd,sparsity}. The dataset includes 3,952 items (movies), 6,040 users, and 1,000,209 ratings, where the ratings take integer values ranging from 1 (i.e., the worst) to 5 (i.e., the best). 
 
 Note that the sparsity (i.e., the ratio of the number of missing cells in a matrix to the total number of cells) of the $6,040\times 3,952$ rating matrix obtained from the original dataset is 95.8\%, which is very high and often causes performance degradation. To resolve the sparsity issue, there have been a variety of studies (see~\cite{sparsity,9ofwww, 10ofwww} and references therein). One popular solution to the data sparsity problem is the use of {\em data imputation}~\cite{sparsity, 9ofwww, 10ofwww}. The zero injection method was proposed in~\cite{sparsity} by defining the pre-use preference as judgment of each user on items before using or purchasing them. Due to the belief that uninteresting items corresponding to the ones having low ranks of inferred pre-use preference scores would not satisfy the belonging users even if recommended, zeros are given to all of the uninteresting items in a rating matrix. Meanwhile, the pureSVD method~\cite{9ofwww} and the allRank method~\cite{10ofwww} that assign zeros and twos, respectively, to all missing cells in a rating matrix were introduced. Even if such data imputation techniques are known to significantly improve the prediction accuracy, we do not employ them in our experiments since solving the data sparsity problem is not our primary focus. 

As an alternative, we use a na\"{\i}ve {\em pre-filtering} method to obtain a less sparse dataset before applying the CF-based rating prediction approach since our main focus is not on improving the performance of recommender systems. Our pre-filtering method is given below:
		\begin{enumerate}
			\item First, among 6,040 users, we select users who have rated at least 400 items (i.e., movies) out of 3,952 items. This gives us a reduced $333\times 3,952$ rating matrix.
			\item From the reduced dataset, we select only the items that have been rated by at least 180 users. This finally gives us a $333 \times 261$ rating matrix with sparsity of 45.5\%.
		\end{enumerate}

 The statistics for the original and pre-filtered datasets are summarized in Table~\ref{table:data}. We then predict all the missing ratings in the pre-filtered rating matrix via the CF-based approach. In our experiments, we randomly select $N$ users and $F$ items out of 333 users and 261 items, respectively, in the pre-filtered rating matrix. We use the Pearson correlation in~\eqref{eq:sim} to evaluate the similarity.
 

\begin{table}[t]
\centering
\caption{Statistics of original and pre-filtered datasets }
\label{table:data}
\begin{tabular}{@{}llll@{}}
\hline
   & \begin{tabular}[c]{@{}l@{}}Original dataset\\ (MovieLens 1M)\end{tabular} & \begin{tabular}[c]{@{}l@{}}Pre-filtered\\ dataset\end{tabular} &\\
\hline
Number of users & 6,040    & 333       \\
Number of items & 3,952    & 261   \\
Total number of cells & 23,870,080    & 86,913    \\
Number of missing cells & 22,869,871   &39,541  \\
Sparsity & 95.8\%  & 45.5\% \\
\hline
\end{tabular}
\end{table}

\subsubsection{Network Setup}~\label{section:612}
Now, let us turn to describing our network setup for experiments. We assume that each user moves independently according to the random walk mobility model~\cite{alfano, malik} in a square torus network of unit area. In the mobility model, the position $x(t)$ of a user at time slot $t$ is updated by $x(t)= x(t-1)+ y_t$, where $y_t$ is a sequence of independent and identically distributed (i.i.d.) random variables that represent a user's flight vector. In order to calculate the contact probability $p_{u_i,u_j}$ in our experiments, we create a synthetic mobility trajectory database of $N$ users for $\mathcal{T}$ time slots according to the following steps:
\begin{itemize}
	\item At $t=1$, $N$ users are distributed independently and uniformly on a square network, which is divided into $s_N$ smaller square cells of equal size.
	\item Then at $t>1$, each user moves independently in any direction from its current location with a flight length uniformly and randomly selected from 0 to the size of each square cell, $\frac{1}{s_N}$.
\end{itemize}  
As addressed in Section~\ref{section:22}, a requesting user can successfully retrieve its desired content from his/her own cache or by communicating directly to other mobile users within the collaboration distance $\mathcal{D}$. For given $\mathcal{D}$, using the mobility trajectory database, it is possible to calculate the average contact probability $p_{u_i,u_j}$ at each time slot as the ratio of the time slots when two users $u_i$ and $u_j$ are within the distance $\mathcal{D}$ to the total number of time slots, $\mathcal{T}$. 

\subsubsection{Setup of System Parameters}~\label{section:613}
In order to analyze the impact of each system parameter on the performance of our proposed strategy, we first perform four types of experiments while evaluating the average hit ratio $H_{avg}$ according to different values of 1) the cache size $S$, 2) the number of users, $N$, 3) the library size $F$, and 4) the collaboration distance $\mathcal{D}$. Next, to investigate the effects of predicted ratings on the hit ratio performance, we conduct another experiment while showing $H_{avg}$ versus the masking percentage of the total given ratings. Finally, we have demonstrated that the proposed strategy is also effective with respect to the outage capacity $\Lambda$ by evaluating $\Lambda$ according to the collaboration distance $\mathcal{D}$.

\subsection{Numerical Results}~\label{section:62}
In this section, we perform numerical evaluation via intensive computer simulations using the pre-filtered MovieLens dataset. We evaluate the hit ratio for the following five algorithms: the optimal algorithm via exhaustive search ({\bf Opt-P1}), the {\bf G1} algorithm for solving~\eqref{eq:ropp} ({\bf G1-P3}), the {\bf G2} algorithm for solving~\eqref{eq:ropp} ({\bf G2-P3}), the {\bf G1} algorithm for solving~\eqref{eq:ropb} ({\bf G1-P4}), and the {\bf G2} algorithm for solving~\eqref{eq:ropb} ({\bf G2-P4}). In our experiments, the {\em user-based} CF technique is adopted as one of CF-based prediction approaches, but other CF techniques such as item-based CF and model-based approaches (e.g., SVD and autoencoder-based CF methods) can also be applied for predicting the ratings. The number of square cells is set to $s_N=40,000$ for simplicity.
\begin{figure}[t]
\centering
 \includegraphics[width=0.55\linewidth]{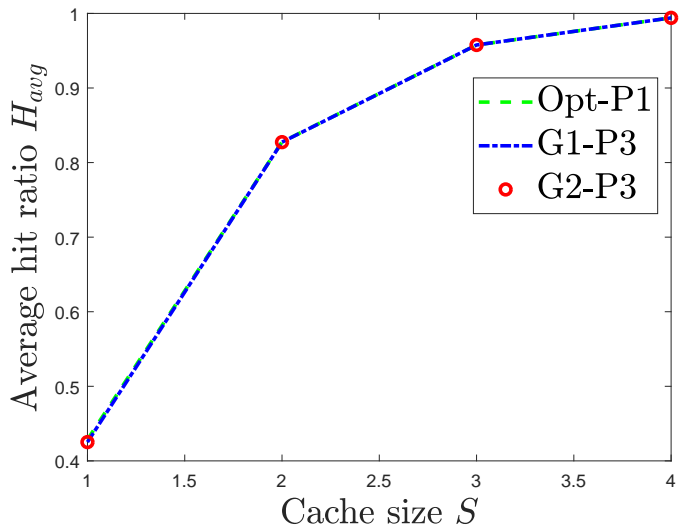}
\caption{The average hit ratio $H_{avg}$ versus the cache size $S$, where algorithms {\bf Opt-P1}, {\bf G1-P3}, and {\bf G2-P3} are compared.}
\label{fig:op}
\end{figure}

\subsubsection{Comparison Between the Optimal and Greedy Approaches}
In Fig.~\ref{fig:op}, we first show how the average hit ratio $H_{avg}$ behaves according to different values of the cache size $S$ to see how close performance of our two greedy algorithms {\bf G1-P3} and {\bf G2-P3} is to that of the optimal one {\bf Opt-P1}, where $N=F=5$ and $\mathcal{D}=0.1$. In this experiment, small values for system parameters $N$ and $F$ are chosen due to the extremely high complexity of {\bf Opt-P1} as $2^{25}$ searches over the cache placement are necessary even with $N=F=5$. In Fig.~\ref{fig:op}, it is obvious that the hit ratio grows monotonically with increasing $S$ since the contact probability becomes higher with increasing $S$. More interestingly, since the performance of {\bf G1-P3} and {\bf G2-P3} is quite comparable to that of {\bf Opt-P1}, using our scalable greedy strategies can be a good alternative  in large-scale mobile networks. In the subsequent experiments, we evaluate the performance of our four greedy algorithms in Section~\ref{section:5} and perform a comparative analysis among them.

\subsubsection{Comparison Between the Proposed and Baseline Approaches}
In Fig.~\ref{fig:wrt}a, the average hit ratio $H_{avg}$ versus the cache size $S$ is illustrated to see how much caching strategies employing the personalized file preferences are beneficial over the case with the common file preferences, where $N=30$, $F=200$, and $\mathcal{D}= 0.05$. As expected, $H_{avg}$ is enhanced with increasing $S$. We can observe that knowledge on the personalized file preferences brings a significant gain compared to the baseline strategy in which all users have the common file preferences (i.e., $P(f_a)= P(f_a|u_i)=P(f_a|u_j)$). The performance difference between two caching strategies becomes significant for a higher value of $S$ as depicted in the figure.

In Fig.~\ref{fig:wrt}b, we illustrate the average hit ratio $H_{avg}$ according to different values of the number of users, $N$, where $S=10$, $F=50$, and $\mathcal{D}= 0.1$. It is observed that the proposed strategy ({\bf G1-P3} and {\bf G2-P3}) consistently outperforms its counterpart scheme employing the common file preferences ({\bf G1-P4} and {\bf G2-P4}). Another interesting observation is that the performance gap between the two caching strategies tends to be reduced with increasing $N$ as depicted in the figure. This is because an increment of $N$ not only increases the overall storage capacity of the network but also the chance of finding a source user within $\mathcal{D}$ increases by virtue of an increased user density within the $\mathcal{D}$.

In Fig.~\ref{fig:wrt}c, the average hit ratio $H_{avg}$ versus the content library size $F$ is illustrated, where $S=15$, $N=30$, and $\mathcal{D}= 0.1$. As expected, $H_{avg}$ decreases with increasing $F$  since the storage capacity of the network does not increase with $F$. We also observe that the performance difference between the two caching strategies employing the personalized file preferences and the common file preferences is significant for almost all values of $F$.
\begin{figure}[t]
  \centering
  \subfigure[Cache size $S$]{\includegraphics[width=.49\linewidth]{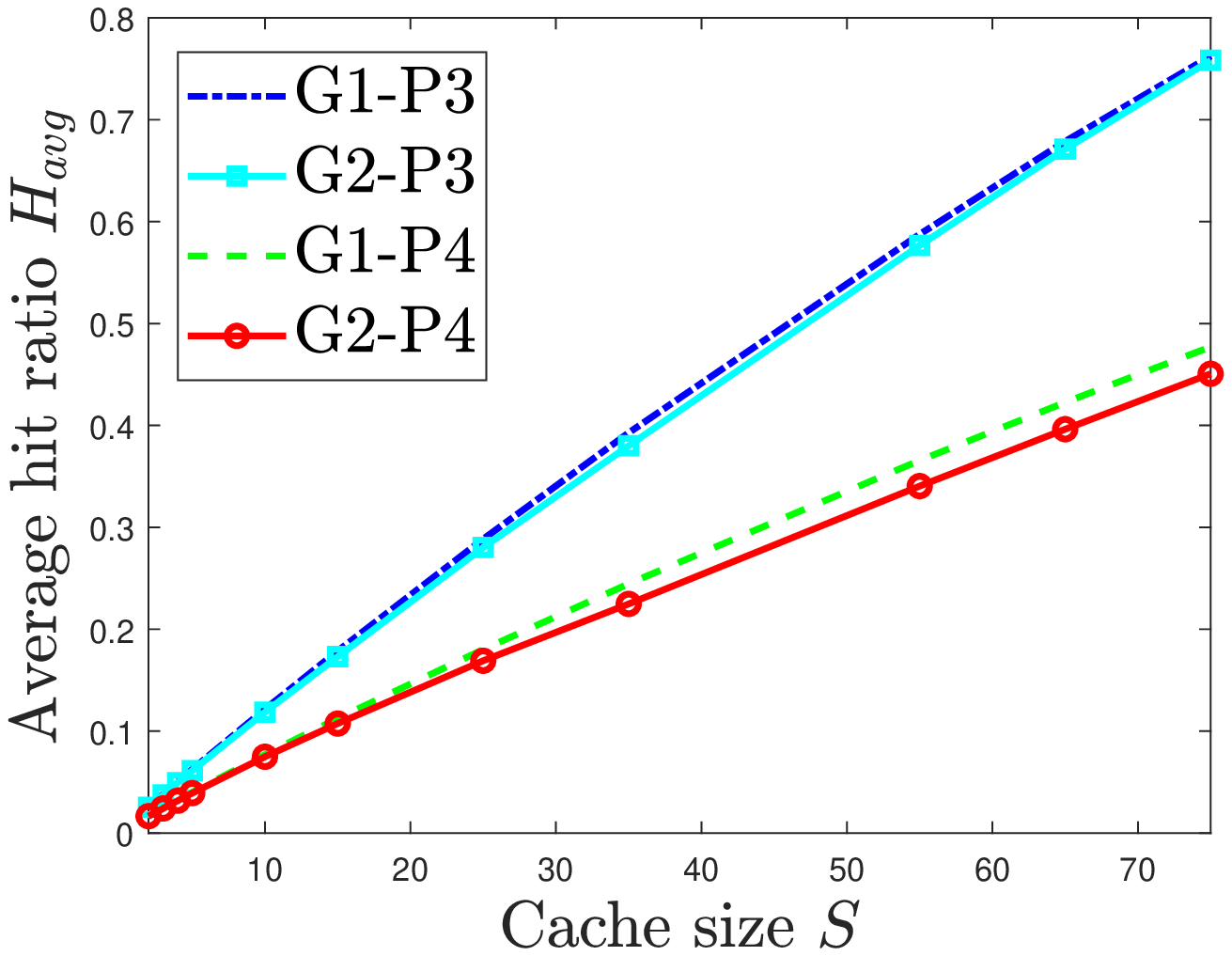}}
  \subfigure[Number of users, $N$]{\includegraphics[width=.49\linewidth]{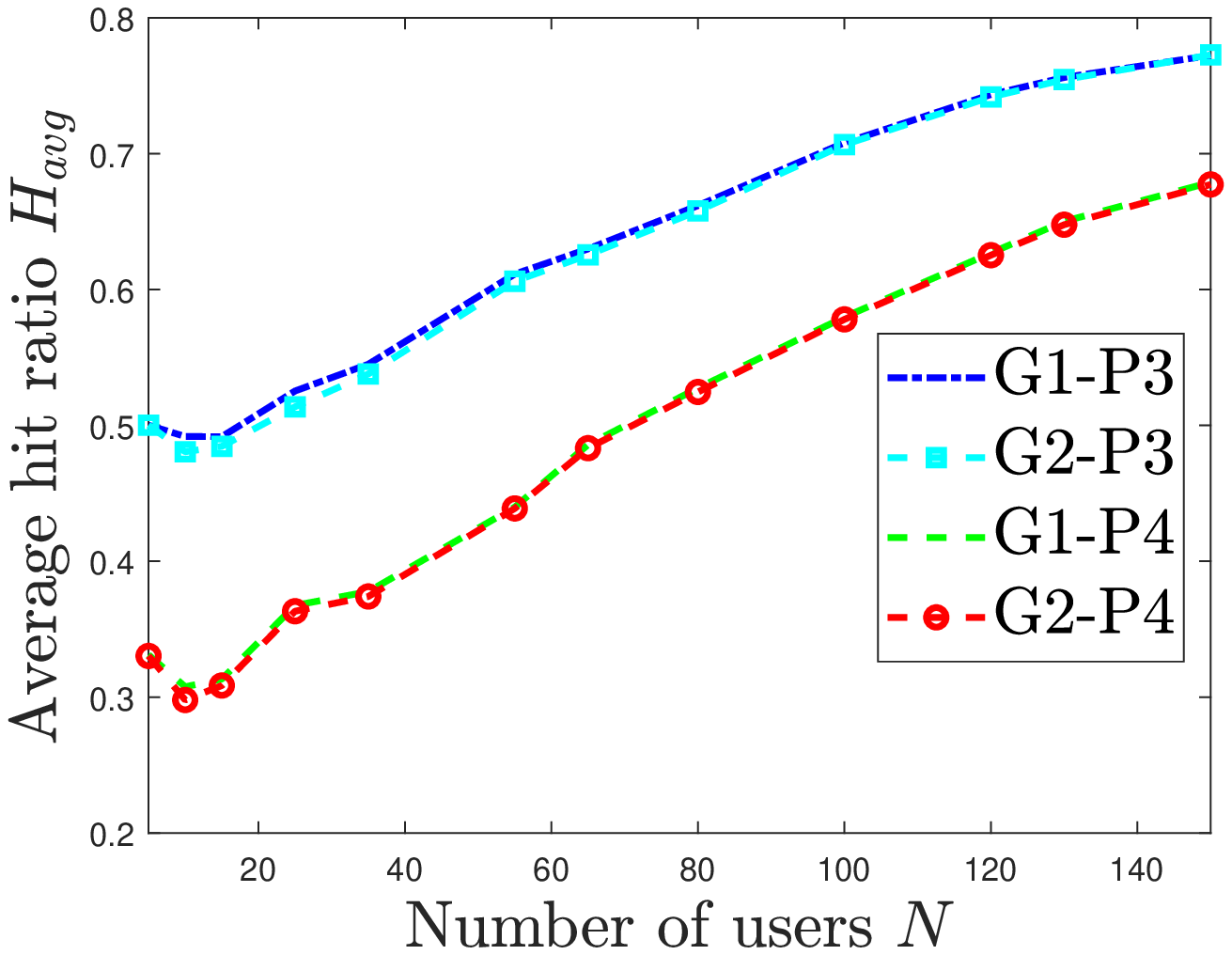}}
  \subfigure[Library size $F$]{\includegraphics[width=.49\linewidth]{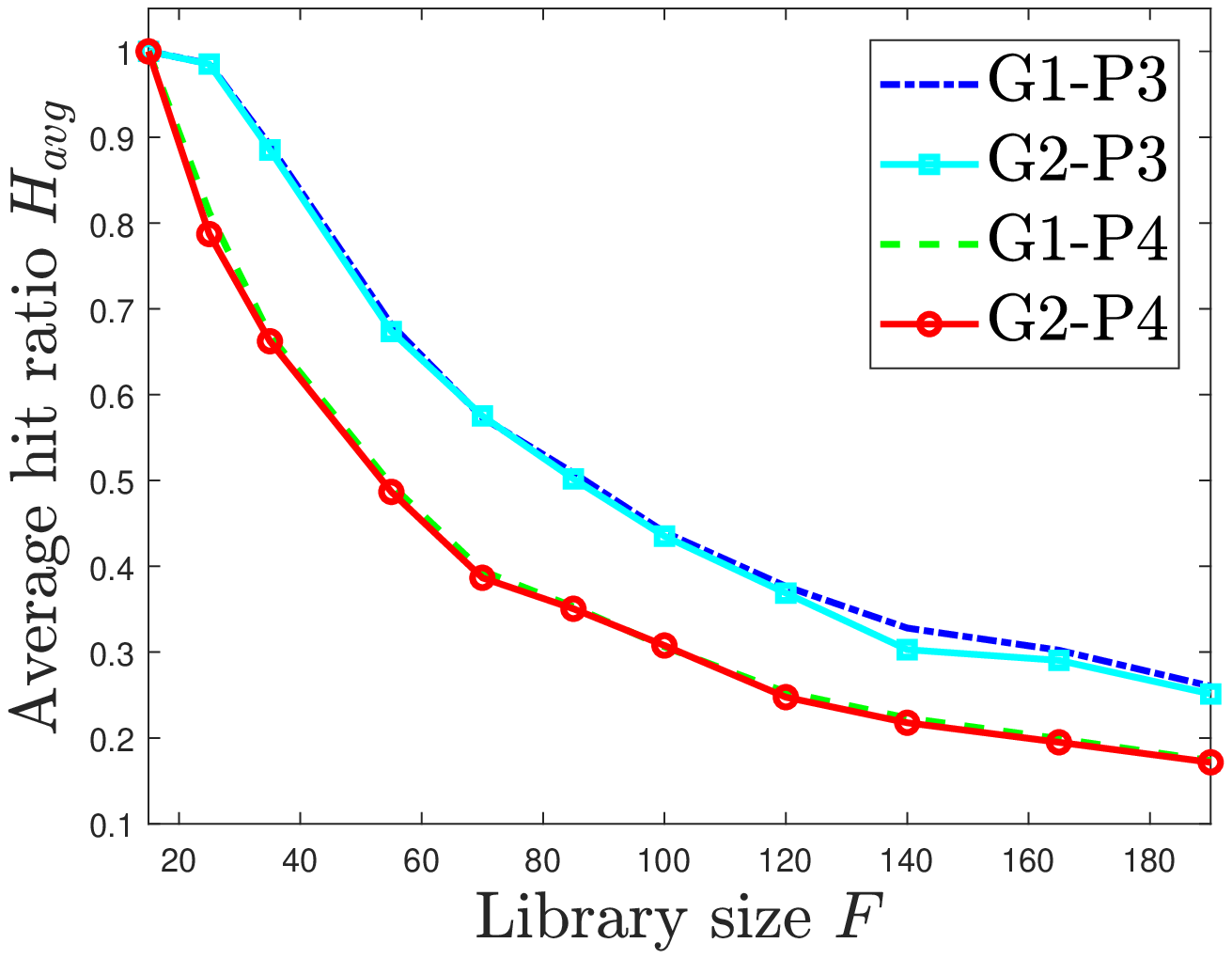}}
  \subfigure[Collaboration distance $\mathcal{D}$]{\includegraphics[width=.49\linewidth]{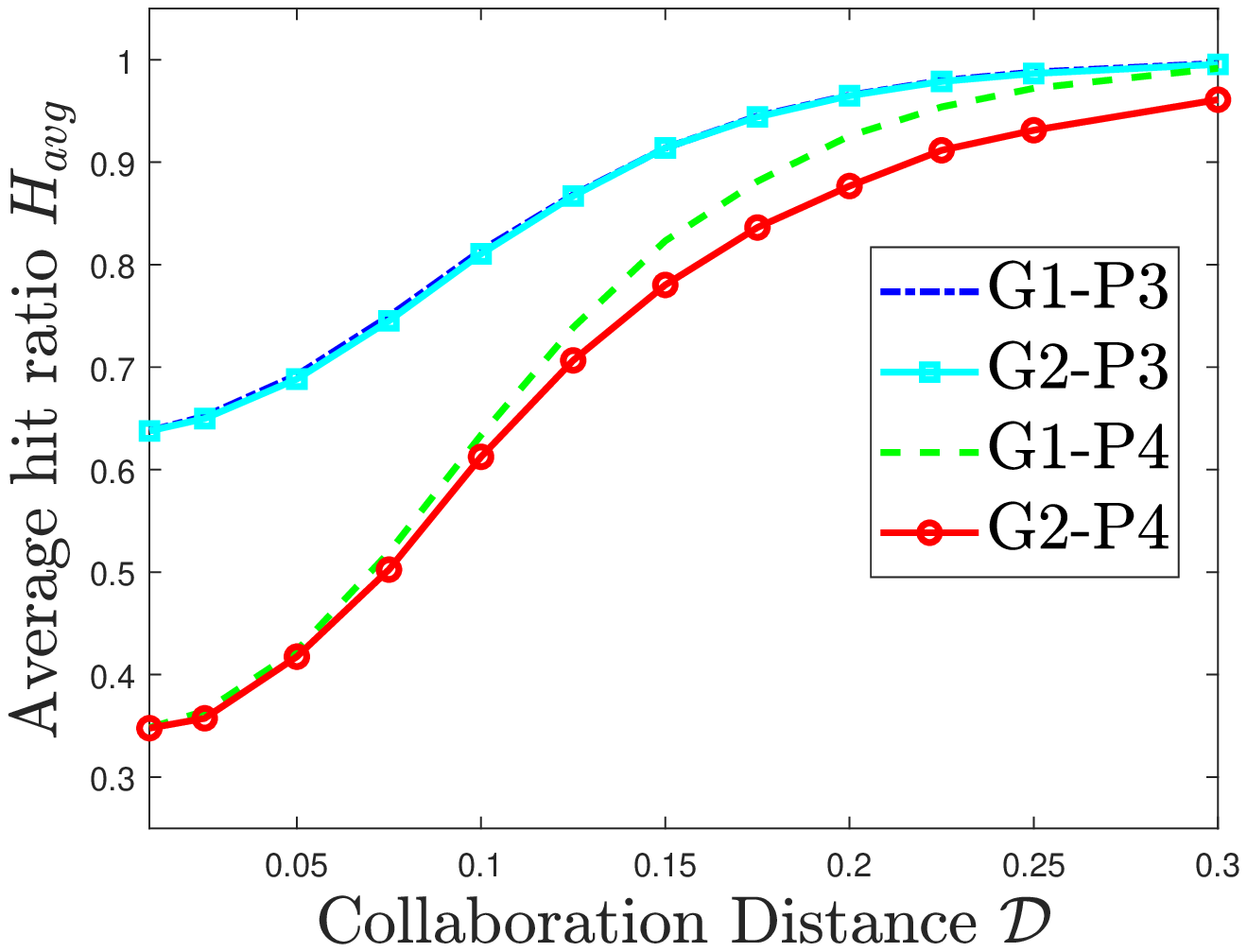}}
  \caption{The average hit ratio $H_{avg}$ according to system parameters, where algorithms {\bf G1-P3}, {\bf G2-P3}, {\bf G1-P4}, and {\bf G2-P4} are compared.}
\label{fig:wrt}
\end{figure}

In Fig.~\ref{fig:wrt}d, the average hit ratio $H_{avg}$ versus the collaboration distance $\mathcal{D}$ is illustrated, where $S=20$, $F=70$, and $N=50$. It is seen that the performance difference between the two caching strategies is noticeable especially for small values  of $\mathcal{D}$ as depicted in the figure. This is due to the fact that increasing $\mathcal{D}$ leads to an increment of the user density within $\mathcal{D}$, which can raise the chance such that a requesting user can interact with more users to retrieve their content---the contact probability $p_{u_i,u_j}$ that two users $u_i$ and $u_j$ are within the communication distance $\mathcal{D}$ increases with the user density within $\mathcal{D}$, thus resulting in the enhancement on the average hit ratio $H_{avg}$. Intuitively, when $\mathcal{D}$ is higher, since a user is capable of interacting with more users, the performance difference between the two caching strategies becomes diminished as depicted in Fig.~\ref{fig:wrt}d. Thus, increasing $\mathcal{D}$ eventually leads to the case where the performance of both caching strategies approaches the upper bound $H_{avg}=1.0$.

From Fig.~\ref{fig:wrt}, we conclude that the caching strategies employing the personalized file preferences performs significantly better than the baseline caching strategies. We also observe that the performance of the {\bf G2} algorithm is quite comparable to or slightly lower than that of the {\bf G1} algorithm, whereas the complexity of the {\bf G2} algorithm can further be reduced (see Remarks \ref{R:2} and \ref{R:4}).

\subsubsection{The Effects of Predicted Ratings}
Furthermore, it is worthwhile to examine the effects of predicted ratings on the hit ratio $H_{avg}$. To this end, we generate another rating matrix based on the pre-filtered $333\times261$ rating matrix (namely, the {\em non-masked} rating matrix) as follows. Another pre-filtered rating matrix (the {\em masked} rating matrix) is obtained by randomly masking $M\%$ of the total given ratings in the original pre-filtered rating matrix. For example, when $M=10\%$, the sparsity of the masked $333\times261$  rating matrix becomes 50.9\%. We then predict all the missing ratings in both non-masked and masked pre-filtered rating matrices via CF, where the case using the non-masked rating matrix provides an upper bound on the hit ratio performance. The rest of environmental settings is the same as those in Section~\ref{section:611}.

\begin{figure}[t]
\centering
 \includegraphics[width=0.6\linewidth]{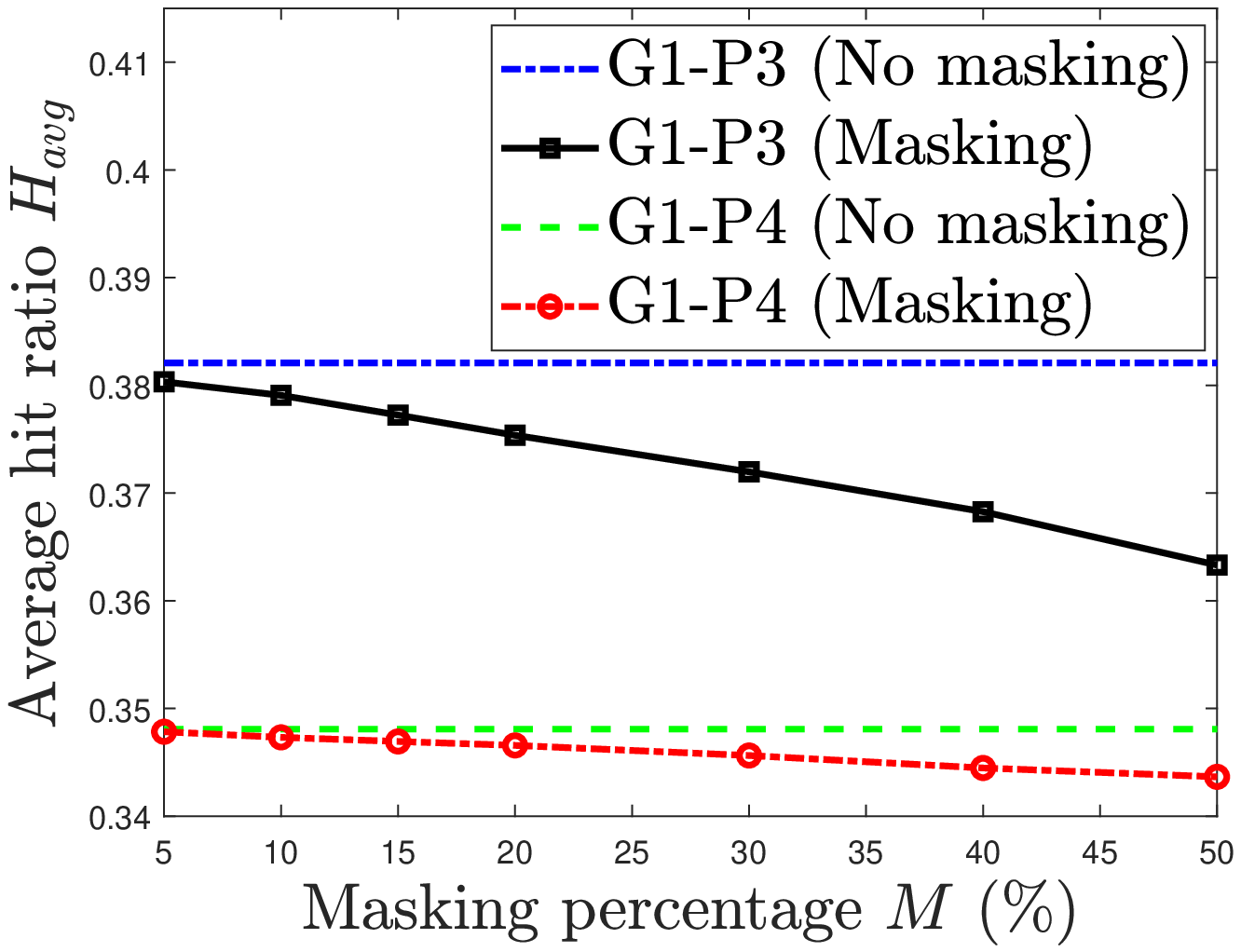}
\caption{The average hit ratio $H_{avg}$ versus the masking percentage $M$ (\%), where four cases {\bf G1-P3 (No masking)}, {\bf G1-P3 (Masking)}, {\bf G1-P4 (No masking)}, and {\bf G1-P4 (Masking)} are compared.}
\label{fig:wrtM}
\end{figure}
In Fig.~\ref{fig:wrtM}, the average hit ratio $H_{avg}$ versus the masking percentage $M$ (\%) is illustrated, where $S=20$, $F=80$, $N=60$, and $\mathcal{D}=0.1$. From this figure, the following observations are found: the performance of {\bf G1-P3 (Masking)} and {\bf G1-P4 (Masking)} (i.e., caching based on the masked rating matrix) is degraded with increasing $M$; {\bf G1-P3 (Masking)} is still quite superior to {\bf G1-P4 (No masking)} (i.e., its counterpart scheme based on the non-masked rating matrix); and the performance gap between {\bf G1-P4 (No masking)} and {\bf G1-P4 (Masking)} (the case with the common file preferences) is negligible. The last observation comes from the fact that the file popularity in~\eqref{eq:pop} is calculated by summing up the predicted ratings for a file $f_a\in \mathcal{F}$ over all $N$ users and thus prediction errors are averaged out.

As addressed before, we are capable of further improving the prediction accuracy by using not only more sophisticated CF but also data imputation techniques. Such approaches enable us to greatly enhance the hit ratio performance of the caching strategies based on the masked rating matrix (i.e., {\bf G1-P3 (Masking)} and {\bf G1-P4 (Masking)}).


 \begin{figure}[t]
\centering
 \includegraphics[width=0.6\linewidth]{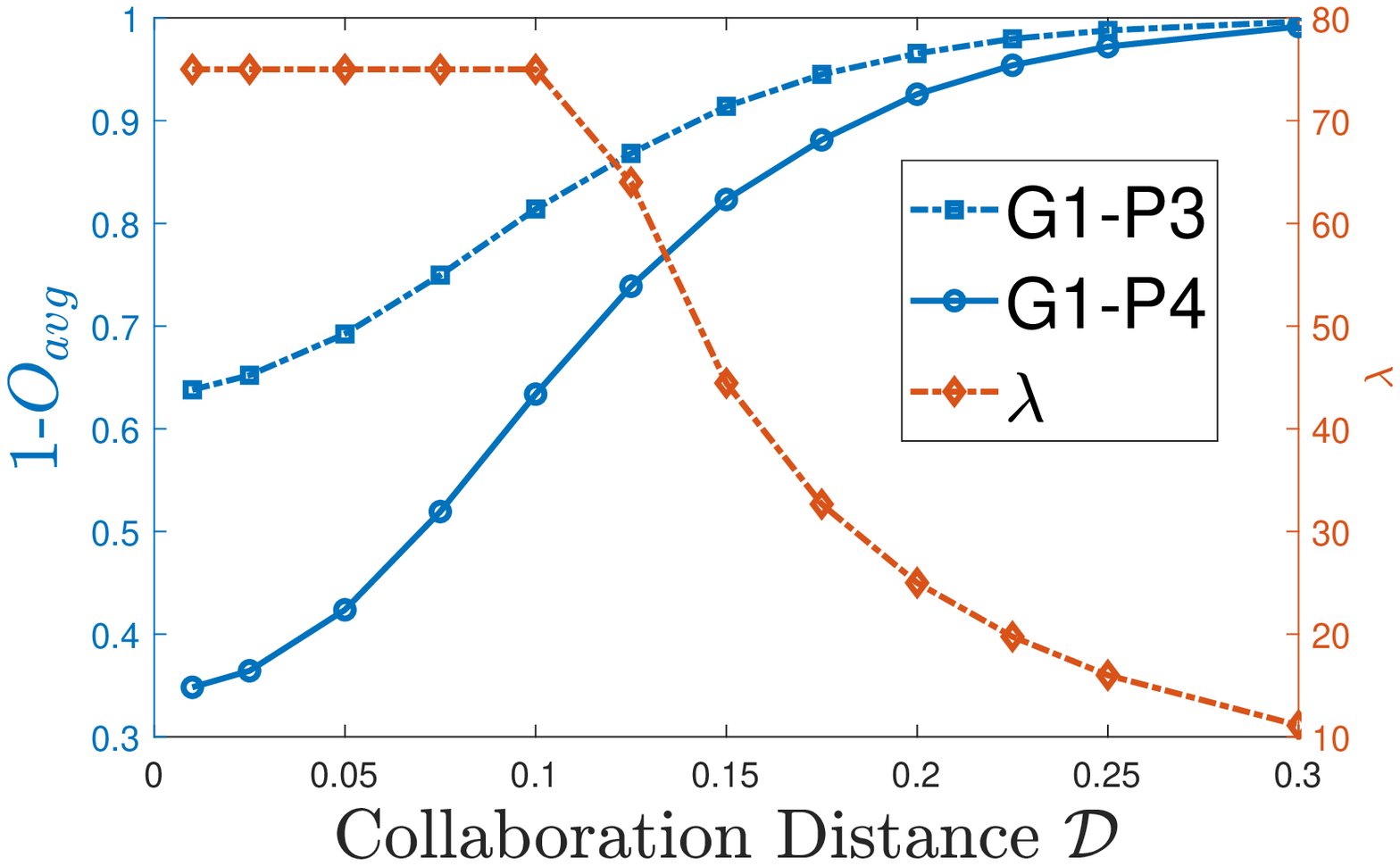}
\caption{The trade-off between $\lambda$ (right) and $1- O_{avg}$ (left) according to the collaboration distance $\mathcal{D}$, where algorithms {\bf G1-P3} and {\bf G1-P4} are compared.}
\label{fig:tradeoffwrtCD}
\end{figure}

\begin{figure}[t]
\centering
 \includegraphics[width=0.6\linewidth]{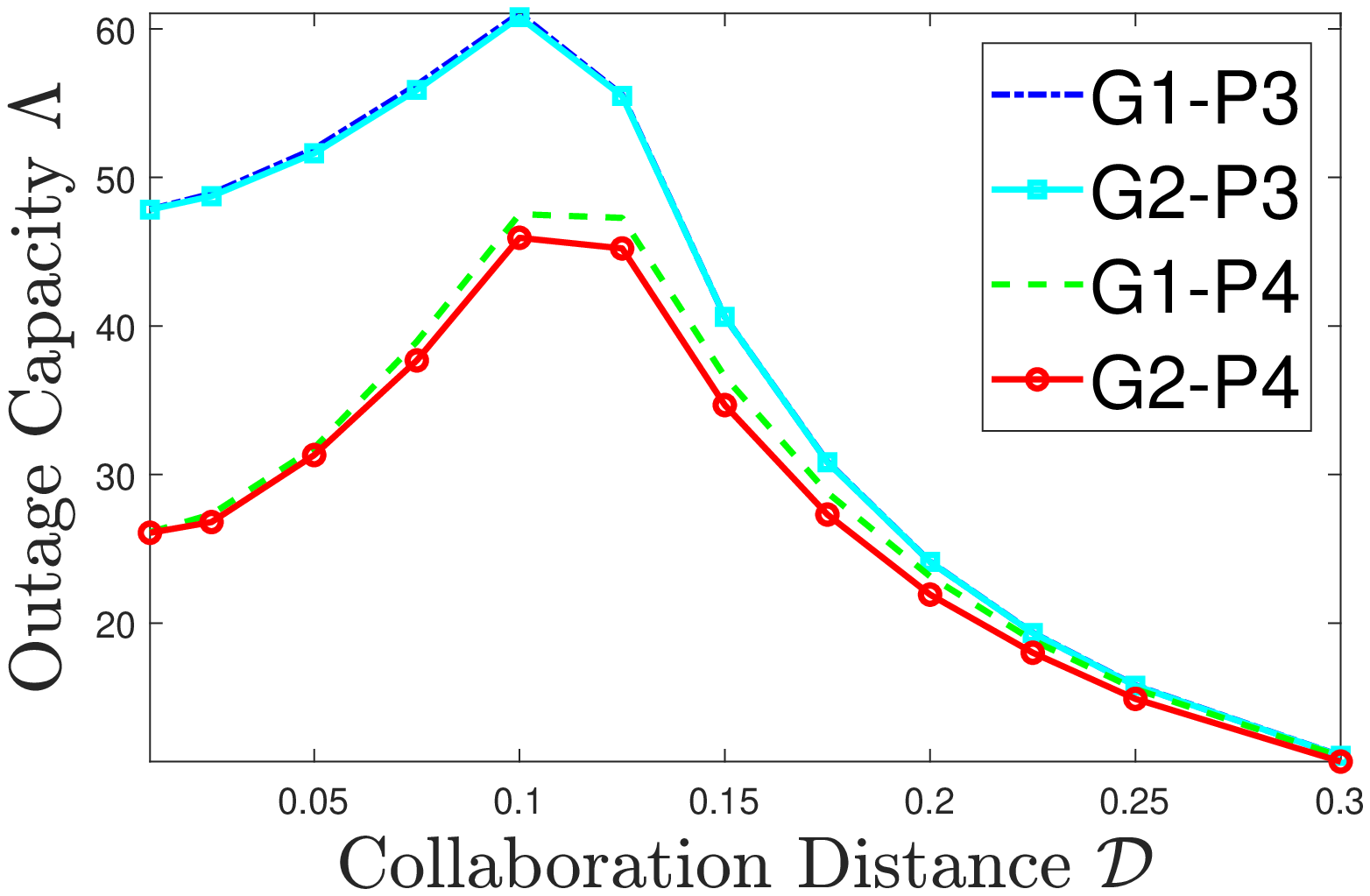}
\caption{The outage capacity $\Lambda$ versus the collaboration distance $\mathcal{D}$, where algorithms {\bf G1-P3}, {\bf G2-P3}, {\bf G1-P4}, and {\bf G2-P4} are compared.}
\label{fig:OCwrtCD}
\end{figure}

\subsubsection{Evaluation in terms of the Outage Capacity} Let us turn to evaluating the performance on the outage capacity $\Lambda$ by numerically showing that our caching strategy employing the personalized file preferences offers substantial gains over the case with the common file preferences.

In Fig.~\ref{fig:tradeoffwrtCD}, we illustrate the trade-off between $\lambda$ in \eqref{eq:oc} and $1- O_{avg}$ in \eqref{eq:oc} according to the collaboration distance $\mathcal{D}$ for $S=20$, $F=70$, and $N=150$, where $\lambda$ is the maximum number of simultaneous single-hop D2D communication links in each time slot and $O_{avg}$ is the average outage probability. From the figure, it is found that increasing $\mathcal{D}$ leads to an increment of the term $1-O_{avg}$ due to the increased user density within $\mathcal{D}$, but reduces the term $\lambda$ due to less D2D communications links over the network.
 
 Furthermore, it  is worth noting that the collaboration distance $\mathcal{D}$ should be decided carefully depending on the performance metric; if we aim at maximizing the average hit ratio $H_{avg}$, then $\mathcal{D}$ should be as large as possible as depicted in Fig.~\ref{fig:wrt}d, but this setting leads to a small $\lambda$, thus resulting in the reduction on the outage capacity $\Lambda$. To numerically find the optimal $\mathcal{D}^*$ in terms of $\Lambda$, in Fig.~\ref{fig:OCwrtCD}, we illustrate the outage capacity $\Lambda$ versus the collaboration distance $\mathcal{D}$, where $S=20$, $F=70$, and $N=150$. From the figure, the following insightful observations are made: 1) the proposed strategy ({\bf G1-P3} and {\bf G2-P3}) consistently outperforms its counterpart scheme ({\bf G1-P4} and {\bf G2-P4}); 2) the optimal $\mathcal{D}^*$ is found at $\mathcal{D}=0.1$ for two different caching strategies; and 3) the performance difference between two caching strategies is significant especially for small values of $\mathcal{D}$.

\subsubsection{Extension to a Mobile Hybrid Network}~\label{section:hb}  We extend our caching strategy to a mobile hybrid network consisting of $L$ static femto-cell base stations (FBSs), denoted by the set $\mathcal{L}=\{l_1,l_2,\cdots,l_L\}$, and $N$ mobile users. Each FBS $l_j\in\mathcal{L}$ is assumed to be equipped with a local cache of the same finite storage capacity $S_{FBS}$, where $S_{FBS}\geq S$ due to a physically larger storage size at each FBS. As in~\cite{wyshin}, the network area is divided into $L$ square cells of equal size, each of which has one FBS placed in the center of the belonging cell. During the content placement process, each FBS $l_i \in \mathcal{L}$ stores the content files according to a caching vector $\textbf{C}_{l_i}= [c_{l_i,f_1}, c_{l_i,f_2}, \cdots, c_{l_i,f_F}]$, where $c_{l_i,f_j}$ is $1$ if the FBS $l_i$ caches file $f_j$ and $0$ otherwise. In order to have a feasible cache allocation strategy $\mathcal{C}_{FBS}^{L\times F}=$ $ [\textbf{C}_{l_1}; \textbf{C}_{l_2}; \cdots; \textbf{C}_{l_L}]$ for FBSs, $\mathcal{C}_{FBS}^{L\times F}$ should satisfy $\sum_{f_j \in \mathcal{F}} c_{l_i,f_j} \leq S_{FBS}$ and $c_{l_i,f_j} \in \left\{0,1\right\}$, $\forall l_i \in \mathcal{L}$, $f_j \in \mathcal{F}$. In the delivery phase, each user retrieves the requested content file either from his/her own cache or via single-hop communications from one of the mobile users or static FBSs storing the requested content file in their caches within the collaboration distance $\mathcal{D}$ of the requested user. In the mobile hybrid network, we present two caching strategies as follows.

 \begin{itemize}
    \item {\bf A na\"{\i}ve strategy}: Each FBS $l_j \in \mathcal{N}$ stores the $S_{FBS}$ most popular content files based on the inferred file popularity $P(f_a)$. Then, each element $c_{l_j,f_i}$ of  a feasible cache allocation strategy $\mathcal{C}_{FBS}^{L\times F}$ is given by 
\begin{equation} \label{eq:c_l}
c_{l_j,f_i} = \begin{cases}
    1 &  f_i \in \mathcal{F}_1 \\
    0 & $otherwise$,
\end{cases}
\end{equation}
where $\mathcal{F}_1 \subset \mathcal{F}$ is the set of the $S_{FBS}$ most popular content files. The caching strategy for mobile users is designed by solving the Problem \textbf{[P3]} without the knowledge of the cached content at FBSs. 
\item {\bf A joint optimization strategy}: We aim at showing a joint cache allocation strategy $\mathcal{\tilde{C}}^{(L+N)\times F}$ $= [\mathcal{C}_{FBS}^{L\times F}; \mathcal{C}^{N\times F}]$ by solving the following optimization problem:
\begin{subequations} \label{eq:opph}
\begin{align} \label{eq:ofph}
\textbf{[P5]} &\underset{\mathcal{\tilde{C}}^{(L+N)\times F}}{\max}  \sum_{u_i \in \mathcal{N}} \sum_{f_a \in \mathcal{F}} P(u_i) P(f_a|u_i) \cdot \nonumber \\ & \Biggl(\!\! 1\!-\!\!\!\!\! \prod_{u_j \in \mathcal{N}}\!\!\!  \left(\!1-\!\!c_{u_j, f_a}p_{u_i,u_j}\!\right)\!\!\! \prod_{l_j \in \mathcal{L}}\!\!\left(\!1-\!\! c_{l_j, f_a}p_{u_i,l_j}\!\right)\!\!\Biggr)\!\!\!\!
\end{align}
subject to
\begin{align}
\sum_{f_a \in \mathcal{F}}c_{u_i,f_a} \leq S, \hspace{2.2cm} \forall u_i \in \mathcal{N},
\end{align}
\begin{align}
\sum_{f_a \in \mathcal{F}}c_{l_j,f_a} \leq S_{FBS}, \hspace{1.7cm} \forall l_j \in \mathcal{L},
\end{align}
\begin{align}
 c_{u_i,f_a}, c_{l_j,f_a} \! \in \!\left\{0,1\right\}, \  \forall u_i\! \in\! \mathcal{N}, l_j\! \in\! \mathcal{L}, f_a\! \in\! \mathcal{F}.
\end{align}
\end{subequations}
Note that the problem \textbf{[P5]} can be solved using almost the same algorithms presented in Section 5. 
\end{itemize}
Intuitively, the performance of the joint optimization strategy should be better than the na\"{\i}ve one, but this comes at the price of increase in the complexity. In the following, we evaluate the hit ratio for the following four algorithms: the na\"{\i}ve strategy with the {\bf G1} algorithm for solving (10) ({\bf N-G1-P3}), the na\"{\i}ve strategy with the {\bf G2} algorithm for solving (10) ({\bf N-G2-P3}), the joint optimization strategy with the {\bf G1} algorithm for solving~\eqref{eq:opph} ({\bf J-G1-P5}), and the joint optimization strategy with the {\bf G2} algorithm for solving~\eqref{eq:opph} ({\bf J-G2-P5}).

\begin{figure}[t]
  \centering
  \subfigure[Number of FBSs, $L$, for $S_{FBS}=30$]{\includegraphics[width=.49\linewidth]{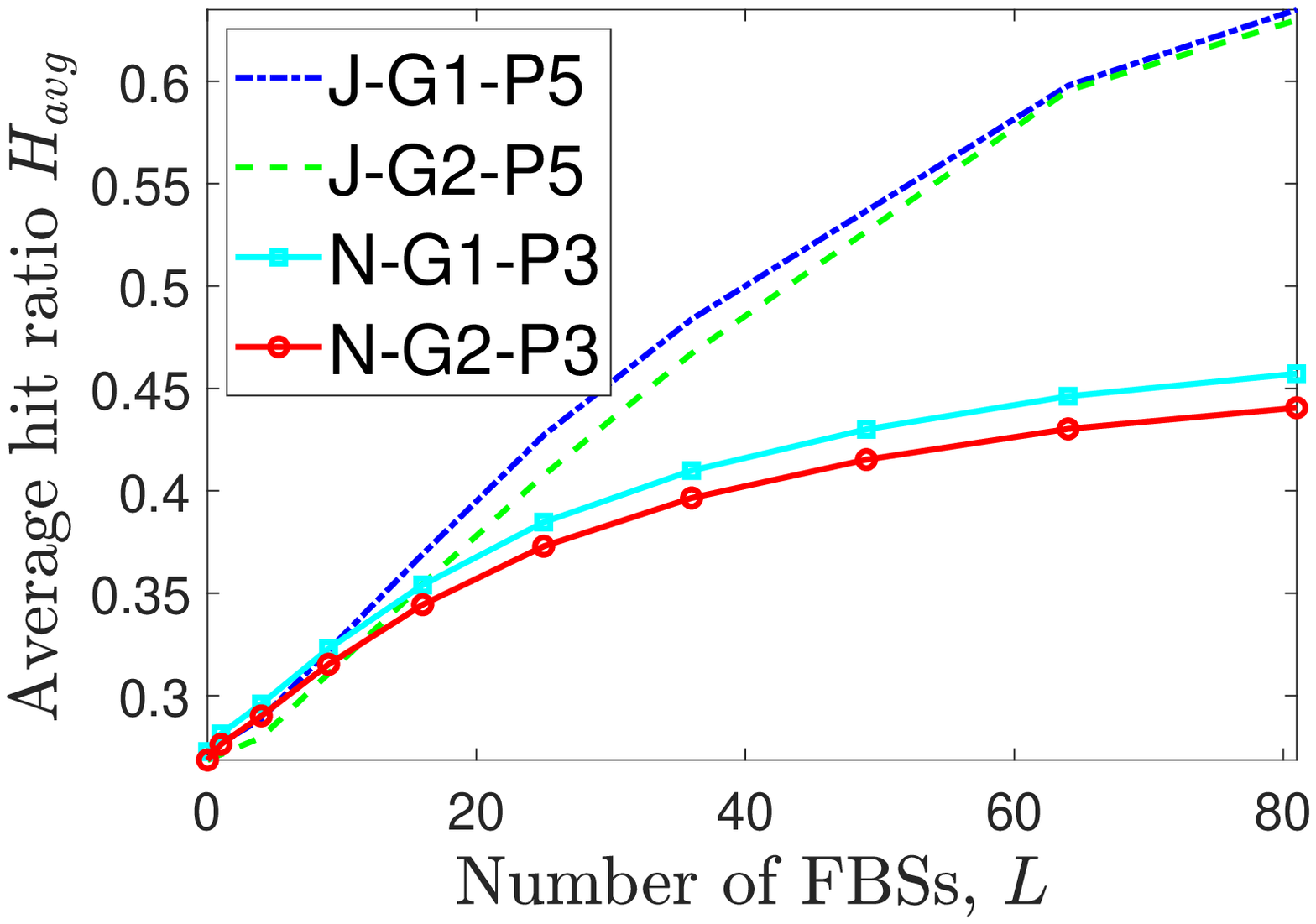}}
  \subfigure[Cache size $S_{FBS}$ for $L=25$]{\includegraphics[width=.49\linewidth]{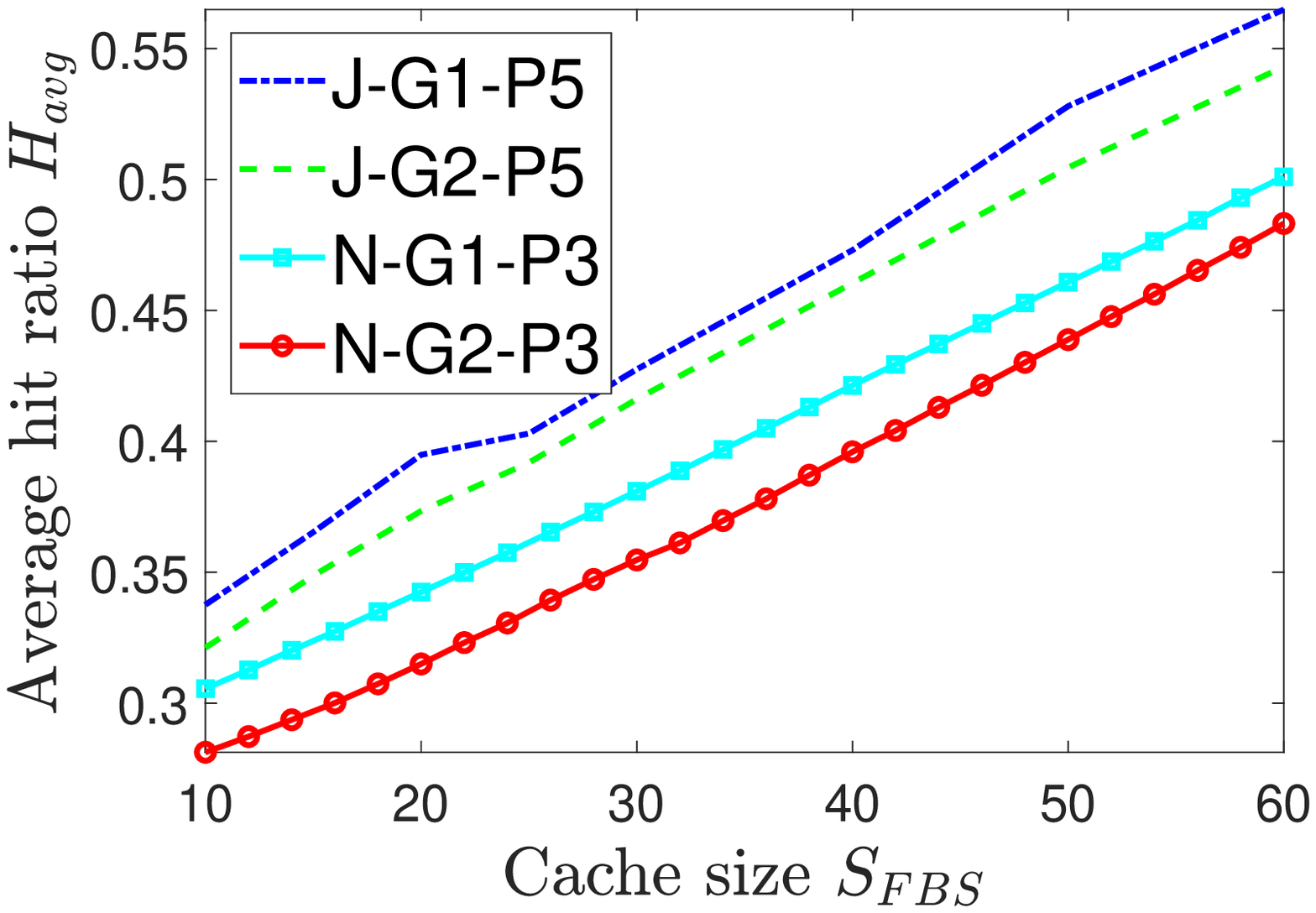}}
  \caption{The average hit ratio $H_{avg}$ for $N=60$, $F=100$, $S=10$, and $\mathcal{D}=0.1$, where algorithms {\bf N-G1-P3}, {\bf N-G2-P3}, {\bf J-G1-P5}, and {\bf J-G2-P5} are compared.}
\label{fig:wrtL}
\end{figure}

In Fig.~\ref{fig:wrtL}a, the average hit ratio $H_{avg}$ versus the number of FBSs, $L$, is illustrated, where $S_{FBS}=30$, $S=10$, $N=60$, $F=100$, and $\mathcal{D}= 0.1$. It is observed that the hybrid mobile network leads to substantial gains over the pure mobile network (i.e., $L=0$). The joint optimization strategy ({\bf J-G1-P5} and {\bf J-G2-P5}) consistently outperforms the na\"{\i}ve strategy ({\bf N-G1-P3} and {\bf N-G2-P3}), where the performance gap between the two strategies tends to get larger with increasing $L$ as depicted in the figure. In Fig.~\ref{fig:wrtL}b, the average hit ratio $H_{avg}$ versus the FBS's cache size $S_{FBS}$ is illustrated, where $L=25$, $N=60$, $F=100$, $S=10$, and $\mathcal{D}= 0.1$. As expected, $H_{avg}$ is enhanced with increasing $S_{FBS}$, and thus the deployment of cache-enabled FBSs brings a significant gain.

\begin{figure}[t]
  \centering
  \subfigure[Cache size $S$ for $N=30$, $F=200$, and $\mathcal{D}=0.05$]{\includegraphics[width=.49\linewidth]{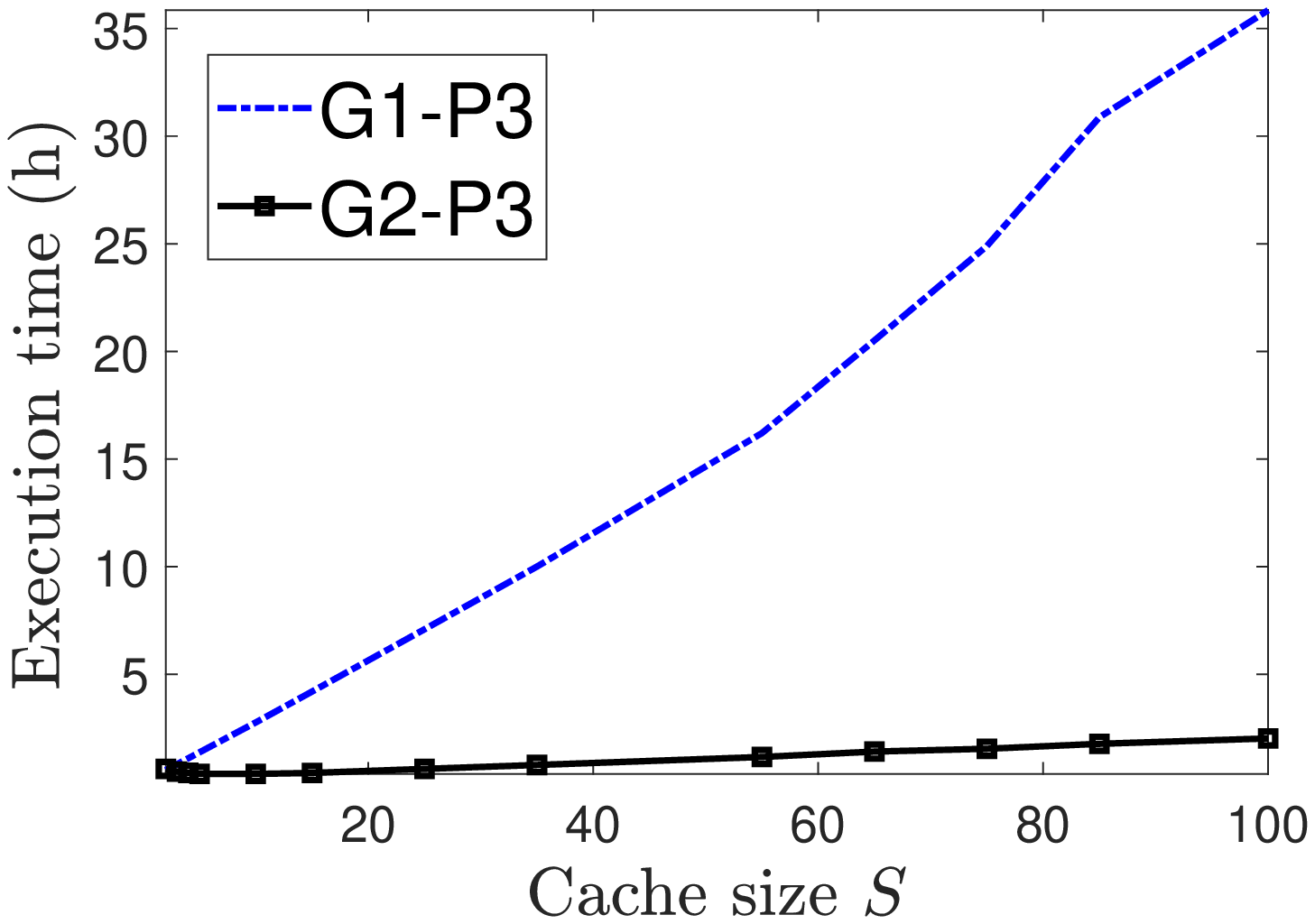}}
  \subfigure[Number of users, $N$ for $S=10$, $F=50$, and $\mathcal{D}=0.1$]{\includegraphics[width=.49\linewidth]{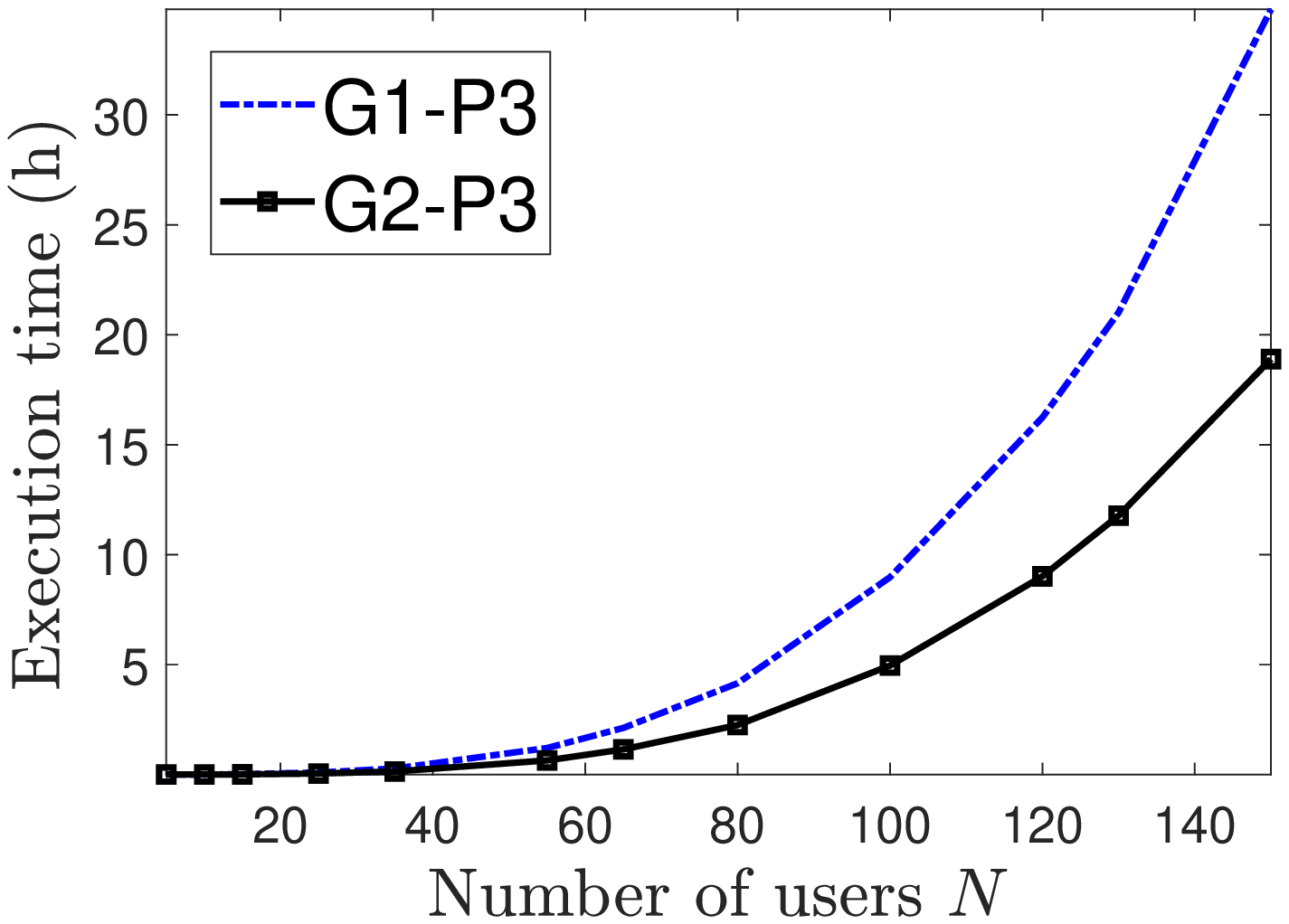}}
   \caption{The execution time according to system parameters, where algorithms {\bf G1-P3} and {\bf G2-P3} are compared.}
\label{fig:Twrt}
\end{figure}

\subsubsection{Empirical Evaluation of Complexity} We evaluate the execution time of the {\bf G1} and {\bf G2} Algorithms to validate our analytical claims in Remarks \ref{R:2} and \ref{R:4}, where the complexity of the {\bf G1} and {\bf G2} Algorithms are given by $\mathcal{O}(SN^3F^2)$ and $\mathcal{O}(N^3F^2)$, respectively. In Fig. \ref{fig:Twrt}a, the execution time of {\bf G1-P3} and {\bf G2-P3} versus the cache size $S$ is illustrated, where $N=30$, $F=200$, and $\mathcal{D}=0.05$. Our findings confirm that the runtime complexity of {\bf G1-P3} scales {\em linearly} with $S$ while the runtime complexity of {\bf G2-P3} hardly scales with $S$, which coincides with our analytical result. It is also seen that the complexity of the {\bf G2} Algorithm is dramatically reduced compared to the {\bf G1} Algorithm. In Fig. \ref{fig:Twrt}b, the execution time of both algorithms versus the number of users, $N$, is illustrated, where $S=10$, $F=50$, and $\mathcal{D}=0.1$. It is observed that the complexities of {\bf G1-P3} and {\bf G2-P3} are increasing with $N$, where they indeed asymptotically follow the {\em cubic} complexity in $N$, which is also consistent with our complexity analysis in Remarks \ref{R:2} and \ref{R:4}.

\section{Concluding Remarks}

This paper investigated the impact and benefits of personalized file preferences in a content-centric mobile network by proposing a CF-aided learning framework that enables us to infer model parameters based on the user rating history. The hit ratio under our mobile network was characterized by adopting single-hop-based D2D content delivery according to the two caching strategies employing both personalized file preferences and common file preferences. The hit ratio maximization problems were then reformulated into a submodular function maximization, and two scalable algorithms including a greedy approach were presented to significantly reduce the computational complexity. The proposed greedy algorithm was shown to achieve an approximate solution that is within a provable gap compared to the optimal solutions. In addition, data-intensive evaluation was performed using the MovieLens dataset, and it was demonstrated that the caching strategy employing the personalized file preferences has a significant performance gain over the case employing the common file preferences. More precisely, it was shown that the performance difference between the two caching strategies is significant under practical scenarios such that 1) the collaboration distance $\mathcal{D}$ is small, 2) the library size $F$ is large, 3) the storage capacity $S$ is neither too high to cache almost entire library nor too low not to be even capable of caching few content files, and 4) the density of users within $\mathcal{D}$ is not too high for users within $\mathcal{D}$ to collectively store the entire content in the library. Moreover, the superiority of our caching strategy employing the personalized file preferences was empirically validated in terms of the outage capacity. In addition, our caching strategies were extended to the hybrid mobile network to see the effects of deploying multiple FBSs on the hit ratio. The computational complexity was also demonstrated in comparison with our analytical results.


\ifCLASSOPTIONcompsoc
  \section*{Acknowledgments}
\else
  \section*{Acknowledgment}
\fi This work was supported by the European Research Council under the EU Horizon 2020 research and innovation program / ERC grant agreement no. 725929 (project DUALITY) and the National Research Foundation of Korea (NRF) grant funded by the Korea government (MSIT) (No.2019R1A2C2007982).


\ifCLASSOPTIONcaptionsoff
  \newpage
\fi

\end{document}